\pgfplotsset{compat=newest}
\newcommand{\F}{\gamma}
\newtheorem{theorem}{Theorem}
\newtheorem*{theorem*}{Theorem}
\newtheorem{proposition}{Proposition}
\newtheorem{lemma}{Lemma}
\theoremstyle{definition}
\newtheorem{remark}{Remark}
\newtheorem{definition}{Definition}
\newtheorem{example}{Example}
\begin{document}

\title{\textbf{Successive Incentives}\thanks{We thank Peter Bogetoft for helpful comments. We are also grateful to seminar and conference participants at Stony Brook, Istanbul Technical University, Waseda University, and Chinese University of Hong Kong for valuable comments and suggestions. Jens Gudmundsson and Jens Leth Hougaard gratefully acknowledge financial support from the Carlsberg Foundation (grant no. CF18-1112). Moreno-Ternero acknowledges financial support from the Spanish Government through grant PID2020-115011GB-I00, funded by MCIN/AEI/10.13039/501100011033.
}}
\author{
\textbf{Jens Gudmundsson}%
\thanks{Department of Food and Resource Economics, University
of Copenhagen, Denmark.%email: jlh@ifro.ku.dk
}  \\
\textbf{Jens Leth Hougaard}%
\thanks{Department of Food and Resource Economics, University
of Copenhagen, Denmark.%email: jlh@ifro.ku.dk
}   \\
\textbf{Juan D. Moreno-Ternero}%
\thanks{Department of Economics, Universidad Pablo de Olavide, Spain.%; CORE, Universit\'{e} catholique de Louvain, Belgium.%, email: jdmoreno@upo.es%
}
\\ %\and
\textbf{Lars Peter \O sterdal}%
\thanks{Department of Economics, Copenhagen Business School,
Denmark.%, email: lpo.eco@cbs.dk%
}} 

\maketitle

\begin{abstract}
    We study the design of optimal incentives in sequential processes. To do so, we consider a basic and fundamental model in which an agent initiates a value-creating sequential process through costly investment with random success. If unsuccessful, the process stops. If successful, a new agent thereafter faces a similar investment decision, and so forth. For any outcome of the process, the total value is distributed among the agents using a reward rule. Reward rules thus induce a game among the agents. By design, the reward rule may lead to an asymmetric game, yet we are able to show equilibrium existence with optimal symmetric equilibria. We characterize optimal reward rules that yield the highest possible welfare created by the process, and the highest possible expected payoff for the initiator of the process. Our findings show that simple reward rules invoking short-run incentives are sufficient to meet long-run objectives. 
\end{abstract}

\textit{JEL} Classification: C70, L24, M52

\textit{Keywords}: Incentives; Sequential processes; Optimal reward rules; Nash equilibrium

\maketitle

\newpage

\section{Introduction}

Economists and social scientists alike have long been concerned with the study of incentives.
The focal question of how taxes affect incentives to work has constantly been in the spotlight \cite[e.g.,][]{break1957income}, but attention has also long been paid to the study of incentives in diverse settings, such as school work or marketing \cite[e.g.,][]{hurlock1925evaluation, haring1953special}.
A central impetus was given decades ago to the formulation of mechanisms to induce team agents to make decisions conforming to the organizational interest \cite[e.g.,][]{alchian1972production, groves1973incentives}, with a special emphasis on the informational aspects of the problem \cite[e.g.,][]{holmstrom1982moral}.
But the search for optimal incentives for collective intelligence 
is of increasing importance nowadays, as global systems of communication, governance, trade, and transport grow rapidly in complexity \cite[e.g.,][]{mann2017optimal}.
Most of these global systems are sequential, as many other processes that abound in nature.
For instance, sequentiality is a pervasive feature of the organization of natural and artificial systems \cite[e.g.,][]{corominas2013origins} and it frequently appears in large scale industrial manufacturing, clinical trials, decentralized computer networks, or decision making and planning, among others \cite[e.g.,][]{azoulay2004capturing, antras2013organizing, winter2006optimal}.
Finally, mining in blockchains, social mobilization, auction design or multi-level marketing have also recently become popular instances of sequential processes \cite[e.g.,][]{pickard2011time, huberman2019economist, li2022diffusion}.

The aim of this paper is to study \textit{successive incentives}.
That is, incentives in a sequential process in which agents make concatenated decisions.
To do so, we consider a stylized model in which an agent (the initiator) initiates a value-creating sequential process with random success by investing money (or effort). If the investment is unsuccessful the process stops; if it is successful the process continues and a new agent faces a similar investment decision, and so on. 
The success rate is common to all agents, and increasing in the investment, but bounded above too.
That is, the success of an investment is never fully guaranteed (even for arbitrary large amounts).
The initiator arrives with a value and each time the process is extended additional value is created. We assume this value is the same each time (and thus normalize it to 1).
Our aim is to explore optimal \textit{reward rules} that specify how the overall generated value in the process is distributed, for any realization of agents (that is, for any situation where the last agent is unsuccessful and thus terminates the process).
As such, each reward rule induces a game among agents, who will invest strategically.
Our first result, Theorem 1, states that, for any choice of rule, there always exists an equilibrium for such a game.
This relies on the fact that the game exhibits an interesting feature: each agent's best response is bounded above, independently of the other agents' actions.
The bounds are nevertheless individual and unboundedly increasing.
In other words, there is no common bound for all agents' best responses, and we can find equilibria with unbounded investments.
The underlying rationale is that the value created by earlier agents can be used to generate even larger incentives for later agents (without imposing any particular structure on the success rate).

Equilibria can also exhibit interesting features regarding the comparison of agents' investment decisions with the first-best investment (the one that maximizes the expected overall generated value in the process).
For instance, we show that, with a mild condition on the success rate, there is an equilibrium in which every agent except the initiator invests more than the first-best amount, whereas the initiator in any equilibrium invests less than the first-best amount.

Nevertheless, as mentioned above, our main research question is to obtain optimal rules to manage the \textit{successive incentives}.
We consider two natural optimality notions in our setting.
On the one hand, to induce investments yielding the highest possible overall expected value of the (sequential) process.
On the other hand, to induce investments yielding the highest possible expected payoff for the initiator.

Regarding the first notion, our Theorem~\ref{TH:socialopt} shows that there is a unique investment profile maximizing the overall expected welfare of the (sequential) process which can be supported in equilibrium.
Such a profile is constant, i.e., all agents invest the same amount. 
This leads to under-investment compared to the first-best investment level.
Multiple rules support that profile (in equilibrium), but a canonical example is the rule in which the value created is split equally among all successful agents. This equal-split rule has the practical advantage that it does not rely on information about the success rate. Therefore, an external benevolent planner can use this rule without knowledge of the success rate and still be sure to implement the unique socially optimal investment profile in equilibrium.

Regarding the second notion of initiator optimality, our Theorem~\ref{TH:initiator} shows that there is a unique investment profile maximizing the expected payoff for the initiator. Such a profile is \textit{near-constant}, i.e., each agent in the process, except for the initiator, invests the same amount. The initiator invests a higher amount. As with Theorem~\ref{TH:socialopt}, multiple rules support that profile (in equilibrium). But it contrast with that case, now all the supporting rules depend on the success rate. Now, as it is common knowledge among the agents, the initiator can easily construct the appropriate rule for implementation. A focal example is the reward rule imposing a specific fixed rate for all successful agents, and leaving the rest for the initiator.

We also consider an extension of the model where agents' investments are limited by the value that is generated through the process itself. In other words, the entire value might not be used to incentivize agents. A key aspect now for a supporting rule is that agents have to be rewarded a positive amount even if they are unsuccessful (to finance their investment) and this amount must be ``set aside" for the agent going forward.  
This changes the socially optimal investment profile. More precisely, our Theorem~\ref{TH:SF-eq} shows that the initiator will now invest more than the rest of the agents, who will all invest a (common) smaller amount (in contrast to the fully constant socially optimal profile we obtained without budget constraints). Theorem~\ref{TH:SF-eq} includes the program that determines the optimal investment profile and presents a simple rule along the lines of the initiator optimal rule, where now some resources are set aside for investments.

Common to all our results on optimal equilibria is that they are implementable by \textit{simple} rules: the value generated by any successful agent is split between this agent and the initiator. This is quite remarkable. By design we could have chosen reward rules with much more complicated structures, treating agents asymmetrically. For instance, rules that incentivize agents via payments conditional on reaching some future state (as in various types of bonus schemes). Such rules may also support the optimal equilibria, yet we demonstrate that complexity is not needed to achieve the long-run goals of incentivization.

The rest of the paper is organized as follows.
We first conclude this introduction providing the connections of our work with the existing literature.
In Section~\ref{SEC:model}, we set up the model and main basic concepts we use throughout the paper. 
In Section~3, we begin our analysis of successive incentives.
We first show that the game induced by reward rules guarantees existence of equilibrium and also emphasize some interesting features of those equilibria. We further provide a few structural results that are key to our subsequent analysis of optimal reward rules. In Section~4 we obtain our main results concerning socially 
 optimal and initiator optimal equilibrium profiles and the reward rules that supports them.
In Section~5, we provide an extension of our analysis to deal with the case in which agents have budget constraints that may limit their investment decisions.
Finally, Section~6 concludes.
To ease exposition, we defer some of our proofs to an appendix.

\paragraph{Related literature}

Our work could be considered as part of the emerging literature on \textit{induction network interventions}, which stimulate peer-to-peer interaction to create cascades in behavioral diffusion, implicitly endorsing that secondary incentives can be more efficient and effective than primary incentives, at least in some contexts \cite[e.g.,][]{ballester2006s, valente2012network, galeotti2020targeting}.
Our analysis of optimal reward rules is actually reminiscent of the analysis in \citet{galeotti2021taxes}, who design and evaluate welfare-enhancing tax policy schemes in complex supply chains consisting of primary and final good producers.
In a similar vein, \citet{bimpikis2019cournot} consider a model of competition among firms that produce a homogeneous good in a networked environment, in which a bipartite graph determines which subset of markets a firm can supply to.
Therein, the social planner has different gains from intervening in different parts of the network.
This is also the case in \citet{elliott2022supply}, where they study novel equilibrium fragilities in the strategic formation of large supply networks, along with new methods for analyzing them.

Our model could also be considered as a moral hazard optimal contracting problem, as pioneered by \citet{holmstrom1982moral}, but in a dynamic setting \cite[e.g.,][]{melumad1995hierarchical, mookherjee2006decentralization}.
In such a setting, followers may shirk when effort is not observed by their predecessors.
This leads predecessors to monitor the effort exerted by their immediate followers.%
\footnote{\citet{gershkov2015formal} show that if the technology satisfies complementarity, peer monitoring substitutes for the principal's monitoring. 
\citet{halac2022monitoring} further study the optimal monitoring of teams, together with the scheme of performance-contingent rewards.}
The final output of the sequential process is determined by a production function, which is cumulative in the efforts of workers and managers at all levels \cite[e.g.,][]{qian1994incentives}.
Within this framework, \citet{winter2006optimal} characterizes the mechanism to allocate rewards among agents so as to induce all of them to exert effort in equilibrium at minimal cost to the principal.
A crucial feature of such a mechanism is to discriminate among agents depending on the degree their efforts are unobservable by their peers.%
\footnote{\citet{winter2004incentives} analyzes the counterpart model with simultaneous decisions, also obtaining that the optimal investment-inducing mechanism endorses discrimination among equals.
\citet{moriya2020asymmetric} 
and \citet{halac2021rank} consider related problems in which agents' incentives to work depend on a hidden state of nature.} 

But the closest contribution to our work is \citet{hougaard2022optimal}.
The focus of that paper is to study the optimal management of evolving hierarchies by means of a similar (more specific) model in which reward rules only allow for upward transfers in the hierarchy.
They also obtained a unique investment profile maximizing the overall expected value of the hierarchy, as well as multiple optimal schemes with respect to the initiator's payoff.
Common to all those results is the prominent role of constant investments for all agents following the initiator along the hierarchy.
In that sense, our results here provide robustness to some of the results in \citet{hougaard2022optimal}.
However, our analysis here is much more general, considering sequential processes for which incentives can be broadly defined, without restricting reward rules to impose only upward transfers.
Specifically, this results in an important difference between the game that is induced by the respective reward rules.
The game in \citet{hougaard2022optimal} has a particularly convenient structure being a supermodular game,%
\footnote{A game is \textit{supermodular} when the marginal value of a player's action is increasing in the other players' actions.
It is well known that for continuous and supermodular game with compact intervals for strategy spaces, agents have increasing best-response functions and, thus, there always exists at least one Nash equilibrium in pure strategies \cite[e.g.,][]{topkis1979equilibrium, milgrom1990rationalizability}.}
whereas the induced game in the present analysis is more complex. 

We also connect to the literature dealing with resource allocation in the presence of a network structure \cite[e.g.,][]{littlechild1973simple, myerson1977graphs, megiddo1978computational, hougaard2017sharing, juarez2018sharing, Hou18}.
Within this literature, the structure of the network might be exploited to define fair allocation among agents connected in the graph, and the minimal distance to the root becomes crucial (as it represents the stand-alone option for the agents).
Sometimes, networks are just assumed to restrict cooperation.
And, some other times, fairness requirements are related directly to the hierarchical network structure. 

We  also relate to an emerging literature dealing with blockchains \cite[e.g.,][]{leshno2020bitcoin, huberman2021monopoly, prat2021equilibrium}. In Proof-of-Work protocols, such as Bitcoin, miners compete to verify new blocks when extending the blockchain. The probability of winning the right to verify a block is proportional to the miner’s computational power in the network. Thus, mining is costly and the miner is rewarded by new Bitcoins (as well as transaction fees users pay). The blockchain itself can therefore be construed as a dynamic process where miners invest costly efforts (solving cryptographic puzzles) with random success. If successful, a value (the block reward) is obtained. The way block rewards are paid out influences the incentives to mine. Our results can shed light on the optimal design of rewards within consensus protocols such as Proof-of-Work.   

\section{Preliminaries} \label{SEC:model}

Successive investment decisions are ubiquitous in society and come in many forms.
A key aspect in these decisions is that costly actions taken by some (economic) agents enable, or force, other agents to take costly actions themselves. The purpose of the investments can either be to create additional value or to mitigate harm.  We shall focus on the former case, where predecessor groundwork is essential for successor progress, and thus we endorse the principle that one ``stands on the shoulders'' of others. This feature is often experienced in diverse settings. Our running examples will be the case of new discoveries building on previous insights in R\&D, as well as the case of investments over time to maintain a value-generating resource (such as an ecosystem creating societal value) but requiring regular costly pollution abatement to thrive.
In either of those cases, the efficient \textit{incentivization} of agents’ investments in the process is a crucial aspect for the practical viability of the process.

Formally, there is an \textbf{initiator} of a sequential process, agent $0$, and an infinite \textbf{set of potential agents} $\mathbb{N} = \{ 1, 2, \dots \}$; let $\mathbb{N}_0 \equiv \{0\} \cup \mathbb{N}$.
The initiator can add value by extending the process through costly investment with random success. If unsuccessful, the process stops.
If successful, a new agent $1$ thereafter faces a similar investment decision, and so forth.
The initiator starts out with a value and each time the process is successfully extended a new value is generated: we assume this value to be the same each time and normalize it to 1.  

An investment profile, in short a \textbf{profile}, is a collection of agents' investments, i.e., $x = (x_0, x_1, \dots)$, where $x_i \geq 0$ for every agent $i$. 
Let $x_{>i} \equiv (x_{i+1}, x_{i+2}, \dots)$ and $x_{\geq i} \equiv (x_i, x_{>i})$.
A profile $x$ is \textbf{constant} if $x = (c, c, \dots)$, \textbf{near-constant} if $x_{\geq 1}$ is constant, and \textbf{constant-tail} if $x_{\geq i}$ is constant for some $i$.
The \textbf{set of profiles} is $X \equiv \mathbb{R}_{\geq 0}^{\mathbb{N}_0}$.

The higher the investment, the more likely the agent is successful in extending the process, but there is always a risk of failure.
Specifically, there is $\varepsilon > 0$ such that, with investment $x_i \geq 0$, agent $i$ successfully extends the process with probability $p(x_i) \in [0,1-\varepsilon]$ with $p(0) = 0$.
The \textbf{success rate} $p: \mathbb{R}_{\geq 0} \to [0, 1 - \varepsilon] $ is common to all agents and fixed throughout. 
Moreover, $p$ is assumed to be increasing, differentiable, and concave in the investment and increases steeply at zero, that is, $p'(x_i) \to \infty$ as $x_i \to 0$.
We also assume that $p$ is such that the ratio 
$\frac{p(\cdot)} { p'(\cdot)}$ is a convex function.\footnote{This ratio is also defined in a different setting as the so-called \textit{fear of ruin} \cite[e.g.,][]{aumann1977power}.}

The value of the process is shared among agents in order to incentivize investments.
Formally, a reward rule $f$, in short a \textbf{rule}, specifies how the value is distributed for any realization of agents.
In particular, for each $k \in \mathbb{N}_0$, agent $i \in \{0, \dots, k\}$ is assigned $f(i,k) \geq 0$ of the total generated value, $k+1$.
Moreover, rules are \textit{balanced} in the sense that $f(0,k) + \dots + f(k,k) = k+1$. 
Let $F$ be the resulting \textbf{set of rules}. Formally,
\[
    F = \{ f \colon \mathbb{N}_0 \times \mathbb{N}_0 \to \mathbb{R}_{\geq 0} 
    \text{ such that }f(0,k) + \dots + f(k,k) = k+1 \text{ for each }k \in \mathbb{N}_0\}.
\]
It is useful to represent rules in matrix form. To do so, we consider $f(i,k)$ in column $i = 0, 1, \dots$ and row $k \geq i$. In this way, the $i$th column is the potential ``payoff stream'' that agent $i$ faces when making her investment decision, whereas the $k$'th row represents the values allocated to the successful agents $\{0,\dots, k-1\}$ and the unsuccessful agent $k$ who terminates the process.\footnote{Note that the $k$'th row is realized with probability $\prod_{i=0}^{k-1}p(x_i)(1-p(x_k))$.}
This is illustrated in Example~\ref{EX:rules}. 

\begin{example}[Examples of rules] \label{EX:rules}
    The canonical \textit{equal split} solution shares the total value generated by successful investment equally among all successful agents.
    This is represented by the rule $f^{ES}$ below.
    Agent $i > 0$ only receives a positive payoff once successful (row $i+1$ and on).
    The \textit{equal split} rule is a distinguished member of a parametric family we dub \textit{fixed-fraction rules}\footnote{See also \cite{hougaard2022optimal} and \cite{GudmundssonHougaardKo2023}.} and denote by $f^{\alpha}$.
    Here, all successful agents get a fixed fraction $\alpha \in [0,1]$ with the residual value going to the initiator.
    Clearly, $f^{ES}= f^1$.
    \[
        f^{ES} = \left[ \begin{array}{cccc} 
            1 \\
            2 & 0 \\
            2 & 1 & 0 \\
            2 & 1 & 1 & 0 \\
            \vdots
        \end{array} \right] 
        \quad
        f^\alpha = \left[ \begin{array}{cccc}
            1 \\
            2 & 0 \\
            3 - \alpha & \alpha & 0 \\
            4 - 2\alpha & \alpha & \alpha & 0 \\
            \vdots
        \end{array} \right]
    \]
    
    We do not impose any structure on how the rewards change as the process is extended. In particular, this allows to accommodate rules such as $f^{JP}$, in which an ever-increasing ``jackpot'' is awarded to the agent who is last to succeed in extending the process.
    Under such a rule, each agent $i$ wants to succeed but prefers $i+1$ to fail (however, they can only affect their own success).
    \[
        f^{JP} = \left[ \begin{array}{cccc} 
            1 \\
            2 & 0 \\
            1 & 2 & 0 \\
            1 & 0 & 3 & 0 \\
            \vdots
        \end{array} \right] 
        \quad
        f^{\alpha,\F} = \left[ \begin{array}{cccc} 
        1 \\
        2 - \F & \F \\
        3 - \alpha - \F & \alpha & \F \\
        4 - 2 \alpha - \F & \alpha & \alpha & \F \\
        \vdots
    \end{array} \right] 
    \]
    This illustrates also the asymmetry between agents:
    as we get further into the process, new ``payoff streams'' become available.
    For instance, comparing agents $0$ and $1$, we have $f(0,0) = 1$ whereas $f(1,1) \in [0,2]$;
    we have $f(0,1) \in [0,2]$ but $f(1,2) \in [0,3]$.
    In this way, we can incentivize agent $1$ to make investments that are not possible to incentivize for agent $0$.
    Finally, $f^{\alpha, \F}$ represents a variation of the class of fixed-fraction rules where unsuccessful agents $i>0$ receive positive payment $\F \in [0,2]$.
    $\hfill \circ$
\end{example}

Given profile $x$, the \textbf{expected investment} is
\[
    \mathbb{I}(x) \equiv x_0 + \sum_{j=1}^{+\infty} \prod_{i=0}^{j-1} p(x_i) x_j. 
\]
As each realized agent adds value $1$, the \textbf{expected value} of the process is
\[
    \mathbb{V}(x) \equiv 1 + \sum_{j=1}^{+\infty} \prod_{i=0}^{j-1} p(x_i).
\]
As $p$ is bounded, the expected value is bounded but the expected investments might not be so.%
\footnote{\label{FN:bounded}Formally, as $p(x_i) \leq 1 - \varepsilon$, $\mathbb{V}(x) \leq 1 + (1 - \varepsilon) + (1 - \varepsilon)^2 + \dots = 1 / \varepsilon$.
On the other hand, when agent $0$ invests $x_0 = 1$ and agent $j > 0$ invests $x_j =\frac{1}{  (j+1) \prod_{i < j} p(x_i) }$, 
we have $\mathbb{I}(x) = 1 + 1/2 + 1/3 + \dots$, which is the divergent harmonic series.}
When expected investments are bounded, we define the \textbf{expected welfare} as 
\[
    \mathbb{W}(x) \equiv \mathbb{V}(x) - \mathbb{I}(x).
\]

Given a rule $f$, each agent $i$ maximizes the expected payoff $U_i(x_{\geq i},f) \in \mathbb{R}$, which consists of three parts: the amount $f(i,i) \geq 0$ assigned if the investment is unsuccessful; the amount $R_i(x_{>i},f) \geq 0$ expected to be assigned if successful; the costly investment $x_i \geq 0$. That is,
\[
    U_i(x_{\geq i},f) \equiv (1 - p(x_i)) f(i,i) + p(x_i) R_i(x_{>i}, f) - x_i,
\]
where
\[
    R_i(x_{>i}, f) \equiv \sum_{k = i+1}^{+ \infty} \prod_{j = i+1}^{k-1} p(x_j) (1 - p(x_k)) f(i,k).
\]
For instance, $R_0(x_{>0}, f) = (1 - p(x_1)) f(0,1) + p(x_1) (1 - p(x_2)) f(0,2) + \dots$.

As the rules are balanced, we obtain the following identity:
\[
    U_0(x_{\geq 0},f) + \sum_{j = 1}^{+ \infty} \prod_{i = 0}^{j-1} p(x_i) U_j(x_{\geq j},f) = \mathbb{W}(x).
\]

Each rule $f$ therefore induces a game where the agents choose their investment levels strategically.
In this game, the rule $f$ and the success rate $p$ are common knowledge, whereas an agent's investment cannot be observed by the remaining agents.
Hence, later agents cannot condition their choice on earlier investments.

\begin{definition}
    A rule $f$ is said to \textbf{support} the profile $x$ whenever $x$ is a Nash equilibrium in the game induced by $f$.
    That is, for each agent $i$ and investment $y_i \geq 0$,
    \[
        U_i((x_i,x_{>i}), f) \geq U_i((y_i,x_{>i}), f). 
    \]
    A profile $x$ is an \textbf{equilibrium profile} if supported by some rule. 
    \hfill $\circ$
\end{definition}

For each $x_i\ge 0$, let $g(x_i)= p(x_i) / p'(x_i)$. Recall that we assumed $p$ to be such that $g$ is convex. For convenience, let $g(0) \equiv \lim_{x_i \to 0} p(x_i) / p'(x_i) = 0$.
As $p$ is concave, $g(x_i) / p(x_i) = 1 / p'(x_i)$ is increasing and $p'(x_i) \cdot (x_i - 0) < p(x_i) - p(0)$. Thus, $g(x_i) > x_i$ for each $x_i > 0$.
Lemma~\ref{OBS:individual} follows immediately from maximizing $U_i(x_{\geq i}, f)$ with respect to $x_i$.

\begin{lemma}
\label{OBS:individual}
    The following statements are equivalent:
    \begin{itemize}
        \item The rule $f$ supports the profile $x$.
        \item For each agent $i$, if $R_i(x_{>i},f) \leq f(i,i)$, then $x_i = 0$;
        otherwise, $x_i > 0$ is such that 
        \[
            R_i(x_{>i}, f) - f(i,i) = \frac{g(x_i)}{p(x_i)}.
        \]
    \end{itemize}
\end{lemma}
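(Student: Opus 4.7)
The plan is to observe that Nash equilibrium in this game has a very simple separable form: agent $i$'s payoff $U_i(x_{\geq i}, f)$ depends only on $x_i$ and on $x_{>i}$ (and on $f$), not at all on the investments $x_{<i}$ of earlier agents. Hence the equilibrium condition is exactly that, for each $i$, holding $x_{>i}$ fixed, $x_i$ maximizes $U_i(\,\cdot\,, x_{>i}, f)$ over $\mathbb{R}_{\geq 0}$. So the entire lemma reduces to characterizing, agent by agent, the maximizer of a one-variable function, and I would attack it via first-order conditions.

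Concretely, I would differentiate $U_i(x_{\geq i}, f) = (1-p(x_i)) f(i,i) + p(x_i) R_i(x_{>i}, f) - x_i$ with respect to $x_i$, obtaining
\[
    \frac{\partial U_i}{\partial x_i} = p'(x_i)\bigl[R_i(x_{>i},f) - f(i,i)\bigr] - 1.
\]
Letting $\Delta \equiv R_i(x_{>i},f) - f(i,i)$, I would split into two cases. If $\Delta \leq 0$, then $p'(x_i) \geq 0$ gives $\partial U_i/\partial x_i \leq -1 < 0$ for every $x_i \geq 0$, so $U_i$ is strictly decreasing and the unique best response is $x_i = 0$, matching the first bullet. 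If $\Delta > 0$, concavity of $p$ gives $\partial^2 U_i/\partial x_i^2 = p''(x_i)\Delta \leq 0$, so $U_i$ is concave in $x_i$, and the first-order condition $p'(x_i) = 1/\Delta$ pins down any interior maximizer; rearranging and multiplying by $p(x_i)/p'(x_i)$ (or noting that $g(x_i)/p(x_i) = 1/p'(x_i)$), this is exactly $\Delta = g(x_i)/p(x_i)$, as claimed.

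To finish the $\Delta > 0$ case I would verify existence and uniqueness of this interior solution, which is the one technical step that needs a moment's attention. The map $x_i \mapsto 1/p'(x_i) = g(x_i)/p(x_i)$ is continuous and nondecreasing (since $p'$ is nonincreasing by concavity), and it goes from $0$ at $x_i = 0$ (by the Inada-type assumption $p'(x_i)\to\infty$ as $x_i\to 0$) to $+\infty$ as $x_i\to\infty$. The latter is the one thing that is not entirely automatic: I would justify it by noting that if $p'(x_i)$ stayed bounded below by some $L>0$, then concavity would force $p(x_i) \geq p(0) + L x_i \to \infty$, contradicting $p(x_i) \leq 1-\varepsilon$; hence $p'(x_i)\to 0$ and therefore $1/p'(x_i)\to\infty$. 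By the intermediate value theorem there is a unique $x_i>0$ with $\Delta = g(x_i)/p(x_i)$, and by concavity of $U_i$ this point is the unique global maximizer. Combining the two cases establishes the equivalence. I do not expect any serious obstacle beyond this boundary-behavior argument for $p'$.
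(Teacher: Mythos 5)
Your proposal is correct and takes essentially the same approach as the paper, which states only that the lemma ``follows immediately from maximizing $U_i(x_{\geq i},f)$ with respect to $x_i$''; your first-order-condition case analysis on the sign of $R_i(x_{>i},f)-f(i,i)$, together with the concavity and Inada/boundedness arguments for $p$, is exactly the omitted verification.
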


By Lemma~\ref{OBS:individual}, the expected equilibrium payoff is $U_i(x_{\geq i},f) = f(i,i) + g(x_i) - x_i$.
That is to say, $g(x_i)$ can be viewed as the gross (excluding investment $x_i$) expected gain over $f(i,i)$ for which investing $x_i$ is optimal. 
Naturally, incentivizing higher investments requires higher returns; 
that is, $g$ is increasing.
We take this one step further as $g$ is convex:
it is increasingly more expensive to incentivize higher investments.%
\footnote{There is a parallel here to the different but related setting of risk-averse gambling.
A risk-averse agent with utility function $u$ pays price $x_i$ for a lottery ticket that awards a prize with probability $\pi$. 
Then the minimal prize amount $g(x_i)$ required for the agent to take part is such that $(1 - \pi) u(- x_i) + \pi u(g(x_i) - x_i) = u(0)$.
As $u$ is concave, $g$ must be convex.
In our setting, the agents are risk-neutral (corresponding to linear $u$) whereas the success rate (corresponding to $\pi$) is concave.}

Lemma~\ref{OBS:individual} pertains to individual equilibrium payoffs.
It will also be useful to have an aggregate expression. 
For that purpose, let
\[
    \mathbb{\hat{V}}(x,f) \equiv f(0,0) + \sum_{j=1}^{+\infty} \prod_{i=0}^{j-1} p(x_i) f(j,j) + \mathbb{G}(x),
\]
where 
\[
    \mathbb{G}(x) \equiv g(x_0) + \sum_{j=1}^{+\infty} \prod_{i=0}^{j-1} p(x_i) g(x_j).
\]
Intuitively, $\mathbb{G}(x)$ can be viewed as the expected (aggregate) cost of incentivizing $x$.
As, in general, there might be both over- and underinvestment at a profile $x$ for a given rule $f$, there is no systematic relation between $\mathbb{V}$ and $\hat{\mathbb{V}}$.
However, in equilibrium, Lemma~\ref{OBS:aggregate} shows that they are equal.%
\footnote{The converse of Lemma~\ref{OBS:aggregate} is not true.
For instance, we may have $\mathbb{V}(x) - 1 \geq \mathbb{G}(x)$ and yet $x$ cannot be supported (see Example~\ref{EX:nearconstantNotEq}).}

\begin{lemma}
\label{OBS:aggregate}
    If the rule $f$ supports the profile $x$, then $\mathbb{V}(x) = \hat{\mathbb{V}}(x,f)$.
    Hence, for each equilibrium profile $x$, $\mathbb{V}(x) - 1 \geq \mathbb{G}(x)$.
\end{lemma}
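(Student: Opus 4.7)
The plan is to reduce the whole statement to the identity already displayed in the paper, namely
\[
    U_0(x_{\geq 0},f) + \sum_{j=1}^{+\infty} \prod_{i=0}^{j-1} p(x_i)\, U_j(x_{\geq j},f) = \mathbb{W}(x),
\]
together with the equilibrium expression for the payoff that falls out of Lemma~\ref{OBS:individual}. The work splits cleanly into the two claims of the statement; no calculus, no limits, no clever estimate is needed beyond careful bookkeeping.

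For the first part I would begin by recording, from Lemma~\ref{OBS:individual}, that whenever $f$ supports $x$ the equilibrium payoff of agent $i$ equals $U_i(x_{\geq i},f) = f(i,i) + g(x_i) - x_i$ (this covers both the case $x_i=0$, where $g(0)=0$ and $f(i,i)\ge R_i$, and the interior case where the first-order condition $R_i - f(i,i) = g(x_i)/p(x_i)$ is plugged back into $U_i$). Setting $q_0 \equiv 1$ and $q_j \equiv \prod_{i=0}^{j-1} p(x_i)$ for $j\ge 1$, I substitute this formula into the identity above to obtain
\[
    \sum_{j=0}^{+\infty} q_j \bigl[f(j,j) + g(x_j) - x_j\bigr] = \mathbb{W}(x) = \mathbb{V}(x) - \mathbb{I}(x).
\]
Since $\mathbb{I}(x) = \sum_{j=0}^{+\infty} q_j x_j$ by definition, the $-x_j$ terms on the left cancel the $-\mathbb{I}(x)$ on the right, leaving
\[
    \sum_{j=0}^{+\infty} q_j f(j,j) + \mathbb{G}(x) = \mathbb{V}(x).
\]
Recognizing that $\sum_{j=0}^{+\infty} q_j f(j,j) = f(0,0) + \sum_{j=1}^{+\infty} q_j f(j,j)$, the left-hand side is exactly $\hat{\mathbb{V}}(x,f)$, which settles the first claim.

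For the second part, the key observation is that balance of $f$ at $k=0$ forces $f(0,0) = 1$ (row $k=0$ contains a single entry summing to $0+1=1$). Combined with non-negativity of $f$, this gives $\sum_{j=0}^{+\infty} q_j f(j,j) \geq q_0 f(0,0) = 1$, so the equality $\mathbb{V}(x) - \mathbb{G}(x) = \sum_{j=0}^{+\infty} q_j f(j,j)$ immediately delivers $\mathbb{V}(x) - 1 \geq \mathbb{G}(x)$.

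I do not anticipate a real obstacle here: the only subtlety is to justify the boundary case $x_i=0$ in Lemma~\ref{OBS:individual} (where $g(x_i) - x_i = 0$, so the formula $U_i = f(i,i) + g(x_i) - x_i$ still holds), and to be explicit that $f(0,0) = 1$ is a consequence of balance rather than an extra assumption. Once these two points are stated the computation is essentially one line of rearrangement.
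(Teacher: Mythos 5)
Your proof is correct and follows essentially the same route as the paper's: substitute the equilibrium payoff formula $U_i = f(i,i) + g(x_i) - x_i$ from Lemma~\ref{OBS:individual} into the balance identity $U_0 + \sum_j \prod_i p(x_i) U_j = \mathbb{W}(x)$, cancel the investment terms against $\mathbb{I}(x)$, and then use $f(0,0)=1$ with $f(j,j)\ge 0$ for the inequality. Your explicit treatment of the $x_i=0$ boundary case and of why $f(0,0)=1$ is a welcome addition, but the argument is the same.
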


\begin{proof}
    As noted after Lemma~\ref{OBS:individual}, for each agent $i$,
    \[
        U_i(x_{\geq i}, f) = f(i,i) + g(x_i) - x_i.
    \]
    Aggregating over all agents,
    \[
        U_0(x_{\geq 0},f) + \sum_{j = 1}^{+ \infty} \prod_{i = 0}^{j-1} p(x_i) U_j(x_{\geq j},f)
        =
        f(0,0) + g(x_0) - x_0 + \sum_{j=1}^{+\infty} \prod_{i=0}^{j-1} p(x_i) ( f(j,j) + g(x_j) - x_j ).
    \]
    As noted, as the rules are balanced, the left-hand side is identical to $\mathbb{W}(x)=\mathbb{V}(x) - \mathbb{I}(x)$;
    the right-hand side is, by definition, $\hat{\mathbb{V}}(x,f) - \mathbb{I}(x)$.
    The first part of the statement then follows.
And the second part then follows from $f(0,0) = 1$ and $f(j,j) \geq 0$:
    \[
        \mathbb{\hat{V}}(x,f) 
        = f(0,0) + \sum_{j=1}^{+\infty} \prod_{i=0}^{j-1} p(x_i) f(j,j) + \mathbb{G}(x)
        \geq 1 + \mathbb{G}(x).
        \qedhere
    \]
\end{proof}
The following Remark~\ref{REM:constant} illustrates the implications of Lemma~\ref{OBS:aggregate} for the particular case of constant profiles.

\begin{remark}[Supporting constant profiles] \label{REM:constant}
    For a constant equilibrium profile, $\bar{x} = (c, c, \dots)$, Lemma~\ref{OBS:aggregate} reduces to $p(c) \geq g(c)$, as 
    \[
        \mathbb{V}(\bar{x}) - 1 
        = \frac{p(c)}{1 - p(c)} 
        \geq \frac{g(c)}{1 - p(c)}
        = \mathbb{G}(\bar{x}).
    \]
    Indeed, $p(c) \geq g(c)$ turns out to be both necessary and sufficient. One can simply consider the rule $f^{\alpha, \F}$, with $\alpha = 1$ and $\F= 1- \frac{g(c)}{p(c)}$,  to support $\bar{x}$.\footnote{Note that when $p(c) = g(c)$, the rule reduces to $f^{ES}$.}
\end{remark}

\section{Structural results} \label{SEC:structural}

In this section, we address some fundamental ideas that set the stage for the ensuing analysis. 
First, we deal with the essential aspect throughout the paper of equilibrium existence.
In Subsection~\ref{SUB:existence}, Theorem~\ref{TH:equilibriumExistence} shows that every rule $f$ supports some investment profile $x$ by invoking an existence theorem due to \citet{Ma1969}.
Second, we explore cost-effective investment profiles.
In Subsection~\ref{SUB:constanttail}, we show that a ``flatter'' profile is both cost effective and cheaper to incentivize.
Specifically, let $x$ and $\bar{x} = (x_0, \dots, x_{k-1}, c, c, \dots)$ be such that $\mathbb{V}(x) = \mathbb{V}(\bar{x})$;
then $\mathbb{I}(x) > \mathbb{I}(\bar{x})$ (Proposition~\ref{PR:flatten}) and $\mathbb{G}(x) > \mathbb{G}(\bar{x})$ (Proposition~\ref{PR:cheaper}). These features will become crucial for our main results in Section~\ref{SEC:optimality}.

\subsection{Equilibrium existence} \label{SUB:existence}

The domain of rules is extremely rich and different rules give rise to very different incentive structures.
For instance, if we consider two of the rules from Example~\ref{EX:rules}, we can observe that an agent's response to an increased investment by another (successor) agent might be qualitatively different. More precisely, let $i$ and $j$ be such that $j > i$. Then, 
under $f^{ES}$, $i$'s optimal investment is unchanged, whereas under $f^{JP}$, it decreases.
Hence, our induced game need not be supermodular.\footnote{As mentioned in the Introduction, this is in contrast with the model analyzed in \citet{hougaard2022optimal}.}

The asymmetry between the agents allows equilibrium investments to grow unboundedly, as the value created by earlier agents can be used to generate even larger investment incentives for later agents.
That is to say, even though a successful investment always adds value $1$, we can have equilibrium investments exceeding $1$---indeed, even though the expected value of the process in its entirety is bounded (footnote~\ref{FN:bounded}), an agent may invest \emph{even more} than that in equilibrium.
We return to this in Subsection~\ref{SUB:constanttail} where we will find cases of overinvestment by \emph{everyone} except the initiator.
While this does not immediately indicate a problem for equilibrium existence, it shows that we cannot limit our search to bounded profiles (for instance, by $1$ or even by $\mathbb{V}(x)$), which at first glance would appear natural.
And, even more, the above raises the concern that some agents might simply be always better off investing more, so there is no well-defined optimal investment.

\begin{example}[Unbounded equilibrium investments] \label{EX:unboundedEq}
    Let $x$ be a profile supported by the rule $f = f^{JP}$ as defined in Example~\ref{EX:rules}.
    Consider agent $i$.
    By Lemma~\ref{OBS:individual} and the design of $f$,
    \[
        \frac{g(x_i)}{p(x_i)} 
        = R_i(x_{>i},f) - f(i,i)
        = (1 - p(x_{i+1})) (i+1).
    \]
    By contradiction, suppose $x$ is bounded by some $\mathcal{B} \geq 0$:
    that is, for each agent $i$, $x_i \leq \mathcal{B}$.
    As $g/p$ is increasing, $g(\mathcal{B}) / p(\mathcal{B}) \geq g(x_i) / p(x_i)$.
    As $p$ is increasing and bounded, $p(x_{i+1}) \leq p(\mathcal{B}) \leq 1 - \varepsilon$, where $\varepsilon > 0$.
    Hence,
    \[
        \frac{g(\mathcal{B})}{p(\mathcal{B})} \geq \frac{g(x_i)}{p(x_i)} \geq \varepsilon \cdot (1 + i).
    \]
    This is a contradiction as the left-hand side is finite whereas the right-hand side grows unboundedly in $i$.
    \hfill $\circ$
\end{example}

Although Example~\ref{EX:unboundedEq} shows that there is no (common) upper bound on equilibrium investments, we are still able to identify \emph{individual} upper bounds.  That is to say, for a given success rate $p$ and agent $i$, there is a bound $B_i \geq 0$ such that, at every equilibrium profile $x$, we have $x_i \leq B_i$.
Even though these bounds $B_i$ are unboundedly increasing in $i$, each is still finite and the bounds can be used as a stepping stone to show equilibrium existence.

\begin{theorem} \label{TH:equilibriumExistence}
    For each rule $f$, there exists a profile $x$ such that $f$ supports $x$.
\end{theorem}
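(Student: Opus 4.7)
The plan is to reduce the existence question to a fixed-point statement for the best-response correspondence on a suitably compactified strategy space. Since Example~\ref{EX:unboundedEq} rules out any uniform upper bound on equilibrium investments, I would not try to restrict the game to $[0,B]^{\mathbb{N}_0}$ for a common $B$; instead, I would construct an \emph{agent-specific} bound $B_i$ and work on the infinite product $K \equiv \prod_{i \in \mathbb{N}_0}[0,B_i]$, which is compact and convex in the product topology on $\mathbb{R}^{\mathbb{N}_0}$ by Tychonoff. After verifying that the best responses of all agents actually stay inside $K$, applying a Kakutani-style theorem on infinite products (Ma, 1969) will deliver a fixed point of the joint best-response map, i.e., a Nash equilibrium.

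My first step is to construct $B_i$. The key observation is that $R_i(x_{>i},f)$ admits a bound that depends on $i$ and $f$ but not on $x_{>i}$. Indeed, balance of $f$ gives $0 \le f(i,k) \le k+1$, and $p \le 1-\varepsilon$ gives $\prod_{j=i+1}^{k-1} p(x_j) \le (1-\varepsilon)^{k-i-1}$, so
\[
    R_i(x_{>i},f) \;\le\; \sum_{k=i+1}^{\infty}(1-\varepsilon)^{k-i-1}(k+1)
    \;=\; \frac{i+2}{\varepsilon} + \frac{1-\varepsilon}{\varepsilon^{2}} \;\equiv\; M_i \;<\;\infty .
\]
By Lemma~\ref{OBS:individual}, at any best response either $x_i=0$ or $g(x_i)/p(x_i)=R_i(x_{>i},f)-f(i,i)\le M_i$. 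Since $g(x_i)/p(x_i)=1/p'(x_i)$ is continuous and strictly increasing from $0$ (because $p'(x_i)\to\infty$ at $0$) to $\infty$ (because $p$ is concave and bounded, so $p'(x_i)\to 0$), the inequality pins down a unique $B_i$ with $p'(B_i)=1/M_i$, and every best response lies in $[0,B_i]$. Thus the joint best-response correspondence maps $K$ into $K$.

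Next I would verify the hypotheses needed for the fixed-point theorem. Continuity of $U_i$ in the product topology on $K$ reduces to continuity of $R_i(x_{>i},f)$, since the $p(x_i)$ terms depend on one coordinate. Each partial sum $R_i^{N}(x_{>i},f)=\sum_{k=i+1}^{N}\prod_{j=i+1}^{k-1}p(x_j)(1-p(x_k))f(i,k)$ depends on finitely many coordinates and is therefore continuous in the product topology; the tail is dominated by the convergent, $x$-free series $\sum_{k>N}(1-\varepsilon)^{k-i-1}(k+1)$, so $R_i^{N}\to R_i$ uniformly on $K$, and $R_i$ is continuous. Concavity of $x_i \mapsto U_i(x_{\ge i},f)$ follows from concavity of $p$ whenever $R_i(x_{>i},f)\ge f(i,i)$; when the reverse inequality holds, any $x_i>0$ is strictly dominated by $x_i=0$. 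In either case the set $\arg\max_{x_i\in[0,B_i]}U_i(x_i,x_{>i},f)$ is a nonempty convex subset of $[0,B_i]$, and Berge's maximum theorem gives upper hemicontinuity of the best-response correspondence in $x_{>i}$. Then Ma's (1969) extension of Kakutani's theorem to products of compact convex subsets of locally convex Hausdorff spaces yields a fixed point $x\in K$, which by construction is a Nash equilibrium of the game induced by $f$.

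The main obstacle, to my mind, is not the fixed-point step itself but the careful handling of the infinite-dimensional setting: (i) noticing that one cannot bound investments uniformly in $i$, yet one can bound each $R_i$ uniformly in $x_{>i}$, and translating this through $g/p$ into the individual caps $B_i$; and (ii) establishing continuity of the infinite series $R_i$ in the product topology, which is what makes the dominating geometric series, and hence the bound $p \le 1-\varepsilon$, indispensable. Once both are in place, invoking Ma's theorem is routine.
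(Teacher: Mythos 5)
Your proposal is correct and follows essentially the same route as the paper's proof: derive an agent-specific bound $B_i$ from a uniform-in-$x_{>i}$ bound on $R_i(x_{>i},f)$ via the increasing, unbounded function $g/p = 1/p'$, work on the compact convex product $\prod_i [0,B_i]$, and invoke \citet[Theorem 4]{Ma1969}. The only differences are cosmetic: you bound $R_i$ by the geometric series $\sum_{k>i}(1-\varepsilon)^{k-i-1}(k+1)$ rather than the paper's tighter $i+1+1/\varepsilon$, and you spell out the continuity of $R_i$ in the product topology and the quasi-concavity in the case $R_i < f(i,i)$ more carefully than the paper does.
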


\begin{proof}
As $g/p$ is increasing, unbounded, and continuous, we can define $B_i \geq 0$ through 
    \[
        \frac{g(B_i)}{p(B_i)} = i + 1 + \frac{1}{\varepsilon}.
    \]
    
    Suppose that agent $i$ is choosing the best response to what all other agents are investing. Then by Lemma~\ref{OBS:individual}, and as $f(i,i) \geq 0$,
    \[
        \frac{g(x_i)}{p(x_i)} = R_i(x_{>i}, f) - f(i,i) \leq R_i(x_{>i},f).
    \]
    An upper bound on $R_i(x_{>i},f)$ is all ``past'' value created by agents $0$ through $i$ (which is $i+1$) and all ``future'' value created by $i+1$ and followers (which is $\mathbb{V}(x_{>i}) < 1 / \varepsilon$; see footnote~\ref{FN:bounded}).
    Hence, 
    \[
        \frac{g(x_i)}{p(x_i)} \leq i + 1 + \frac{1}{\varepsilon} = \frac{g(B_i)}{p(B_i)}.
    \]
    As $g/p$ is increasing, it follows that $x_i \leq B_i$.
    
    Let $X_i \equiv [0,B_i] \subset \mathbb{R}$ be the potential best responses for agent $i$.
    Let $u_i(x) \equiv U_i(x_{\geq i},f) = f(i,i) + p(x_i) ( R_i(x_{>i}) - f(i,i) ) - x_i$.
    As $p$ is concave in $x_i$, $u_i$ is concave (and thus quasi-concave) in $x_i$.
    As both $p$ and $R_i$ are continuous in $x$, so is $u_i$. 
    Equilibrium existence now follows from \citet[Theorem 4]{Ma1969}, restated here with $I \equiv \mathbb{N}_0$ being the set of agents:
    
    \begin{quotation} \noindent
        Let $\{X_i\}_{i \in I}$ be an indexed family, finite or infinite, of nonempty compact convex sets each in a separated topological vector space.
        Let $\{u_i\}_{i \in I}$ be a family of real-valued continuous functions defined on $X = \prod_{j \in I} X_j$.
        If for each $i \in I$ and for any fixed 
        \[
            x_{-i} \in \prod_{\substack{j \in I \\ j \neq i}} X_j,
        \]
        $u_i(x_i,x_{-i})$ is a quasi-concave function of $x_i \in X_i$, then there exists a point $y \in X$ such that for any $i \in I$,
        \[
            u_i(y) = \max_{z_i \in X_i} u_i(z_i,y_{-i}),
        \]
        where $y_{-i}$ is the projection of $y$ in
        \[
            \prod_{\substack{j \in I \\ j \neq i}} X_j.
        \]
    \end{quotation}
    For our purposes, the point $y$ is an equilibrium profile under $f$.
    This completes the proof.
\end{proof}

A canonical class of rules are those in which an agent's reward is affected by their own success but not the success or failures of others.
That is, $f$ is such that $f(i,i+1) = f(i,i+2) = \dots$;
see for instance $f^{ES}$ in Example~\ref{EX:rules}.
For such rules, $R_i(x_{>i},f) = f(i,i+1)$ and Lemma~\ref{OBS:individual} implies that $i$'s investment is the same in every equilibrium.
Moreover, even if this only applies to all agents $i > 0$ (compare $f^{\alpha,\gamma}$ in Example~\ref{EX:rules}), the conclusion extends also to agent $0$ and therefore implies that the equilibrium profile is unique.
In Sections~\ref{SEC:optimality} and~\ref{SEC:budget}, we will design rules under various optimality criteria;
all of the rules that we identify will be as above, meaning that they guarantee unique equilibrium investments. 

\subsection{Constant-tail profiles} \label{SUB:constanttail}

Two optimality criteria that we will explore pertain to maximizing welfare (subject to equilibrium constraints).
Given the identity $\mathbb{W}(x) = \mathbb{V}(x) - \mathbb{I}(x)$, a necessary condition for such a maximizer $x$ will be that it minimizes investments $\mathbb{I}(\cdot)$ for the particular level of value created $\mathbb{V}(x)$. 
Hence, as a stepping stone towards identifying optimal rules and investments, we first ask the following:
given a profile $x$, is there a way to reduce investments $\mathbb{I}$ without affecting the value created $\mathbb{V}$?

Our next result answers the previous question affirmatively.
Specifically, it suggests to ``flatten'' the tail of $x$ at any point to create $\bar{x} = (x_0, \dots, x_{k-1}, c, c, \dots)$, where $c \geq 0$ is chosen such that $\mathbb{V}(x) = \mathbb{V}(\bar{x})$.
Proposition~\ref{PR:flatten} states that this reduces the expected investments.
The assumption that $p$ is concave is essential for this result.

\begin{proposition}[Constant-tail profiles are cost effective] \label{PR:flatten}
    For each $x \in X$, each $k \in \mathbb{N}_0$, and each $\bar{x} = (x_0, \dots, x_{k-1}, c, c, \dots) \in X$ such that $\mathbb{V}(x) = \mathbb{V}(\bar{x})$, we have $\mathbb{I}(x) \geq \mathbb{I}(\bar{x})$.
\end{proposition}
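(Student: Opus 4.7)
The plan is to reduce to the case of a fully constant profile and then invoke Jensen's inequality, using that $p^{-1}$ is convex because $p$ is concave.

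First, I would factor out the common prefix. Both $\mathbb{V}$ and $\mathbb{I}$ decompose at position $k$ into the contribution of the shared initial segment plus the probability $P \equiv \prod_{i=0}^{k-1} p(x_i)$ of reaching position $k$ times the tail values $\mathbb{V}(x_{\geq k})$ and $\mathbb{I}(x_{\geq k})$, with analogous formulas for $\bar{x}$ (same prefix, same $P$). If $P = 0$, both investments agree on the prefix and the tails do not contribute, so the claim is trivial. Otherwise, $\mathbb{V}(x) = \mathbb{V}(\bar{x})$ forces $\mathbb{V}(x_{\geq k}) = \mathbb{V}(\bar{x}_{\geq k})$, and $\mathbb{I}(x) \geq \mathbb{I}(\bar{x})$ reduces to $\mathbb{I}(x_{\geq k}) \geq \mathbb{I}(\bar{x}_{\geq k})$. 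Hence it suffices to prove the $k=0$ version: given any profile $y$ and constant profile $\bar{y} = (c,c,\ldots)$ with $\mathbb{V}(y) = \mathbb{V}(\bar{y})$, show that $\mathbb{I}(y) \geq \mathbb{I}(\bar{y})$.

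Second, I would set up Jensen's inequality on the tail. Let $V \equiv \mathbb{V}(\bar{y}) = 1/(1-p(c))$, set $Q_j \equiv \prod_{i=0}^{j-1} p(y_i)$, and define $w_j \equiv Q_j/V$. Since $\sum_{j \geq 0} Q_j = \mathbb{V}(y) = V$, the $w_j$ form a probability distribution. Let $\phi \equiv p^{-1}$; concavity of $p$ yields convexity of $\phi$. Using $y_j = \phi(p(y_j))$,
\[
\mathbb{I}(y) = \sum_{j \geq 0} Q_j y_j = V \sum_{j \geq 0} w_j\,\phi\bigl(p(y_j)\bigr) \;\geq\; V \phi\Bigl(\sum_{j \geq 0} w_j p(y_j)\Bigr)
\]
by Jensen. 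The critical observation is that the inner argument telescopes: $\sum_{j \geq 0} w_j p(y_j) = V^{-1} \sum_{j \geq 0} Q_j p(y_j) = V^{-1} \sum_{j \geq 0} Q_{j+1} = V^{-1}(V - 1) = p(c)$. Therefore $\mathbb{I}(y) \geq V \phi(p(c)) = Vc = c/(1-p(c)) = \mathbb{I}(\bar{y})$, as required.

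The main obstacle is spotting the telescoping identity $\sum_j w_j p(y_j) = p(c)$, which is exactly what makes Jensen yield the sharp bound rather than a weaker one. Once identified, the proof is almost a one-liner, and the concavity of $p$ enters solely through the convexity of $p^{-1}$. A minor technical point is that Jensen over countably many atoms applies directly via the subgradient of $\phi$ at $p(c)$, so no truncation argument is needed, and the case $\mathbb{I}(y) = +\infty$ holds trivially.
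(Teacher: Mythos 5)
Your proof is correct and is essentially the paper's own argument in dual form: the paper applies Jensen to the concave $p$ at the weighted average $\mathbb{I}(x)/\mathbb{V}(x)$ and then inverts by monotonicity, which is the same inequality as your Jensen step for the convex $\phi = p^{-1}$, and it uses the identical telescoping identity $\sum_j Q_j p(y_j) = \mathbb{V}(y)-1$. The only cosmetic difference is that you factor out the common prefix directly where the paper runs an induction on $k$.
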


\begin{proof}
    Say first $k = 0$, which corresponds to $\bar{x} = (c, c, \dots)$.
    As $p$ is concave, 
    \begin{align*}
        p \left ( \frac{\mathbb{I}(x)}{\mathbb{V}(x)} \right )
        &= p \left ( \frac{x_0 + p(x_0) x_1 + p(x_0) p(x_1) x_2 + \dots}{1 + p(x_0) + p(x_0) p(x_1) + \dots} \right ) \\
        &\geq \frac{p(x_0) + p(x_0) p(x_1) + p(x_0) p(x_1) p(x_2) + \dots}{1 + p(x_0) + p(x_0) p(x_1) + \dots}
        = \frac{\mathbb{V}(x) - 1}{\mathbb{V}(x)}.
    \end{align*}
    As $\mathbb{V}(x) = \mathbb{V}(\bar{x})$,
    \[
        p \left ( \frac{\mathbb{I}(x)}{\mathbb{V}(x)} \right )
        \geq \frac{\mathbb{V}(\bar{x}) - 1}{\mathbb{V}(\bar{x})}
        = p(c).
    \]
    As $p$ is increasing, we have $\mathbb{I}(x) / \mathbb{V}(x) \geq c = \mathbb{I}(\bar{x}) / \mathbb{V}(\bar{x})$.
    As $\mathbb{V}(x) = \mathbb{V}(\bar{x})$, $\mathbb{I}(x) \geq \mathbb{I}(\bar{x})$.

    Assume now the statement is correct up to some $k \geq 0$.
    Consider the case $k+1$, so $x_0, \dots, x_k = \bar{x}_0, \dots, \bar{x}_k$.
    If $x_i > 0$ for all $i \leq k$, by construction,\footnote{For ease of exposition, the proofs will henceforth restrict to the more challenging case of positive investments and we omit the easier (yet largely analogous) case that some agents invest zero. Besides being obviously inefficient (recall that $p$ increases steeply at zero), zero investments are easier to address as the relevant sums become finite and they require no new techniques. 
The results, nevertheless, should be understood as covering the full domain that includes zero investments.}
    \[
        \mathbb{V}(x)
        = 1 + p(x_0) + \dots + p(x_0) \cdots p(x_k) \mathbb{V}(x_{>k})
        = 1 + p(\bar{x}_0) + \dots + p(\bar{x}_0) \cdots p(\bar{x}_k) \mathbb{V}(\bar{x}_{>k})
        = \mathbb{V}(\bar{x}).
    \]
    Hence, $\mathbb{V}(x_{>k}) = \mathbb{V}(\bar{x}_{>k})$.
    We can now reapply the argument above but with respect to the profiles $x_{>k}$ and $\bar{x}_{>k}$ to conclude that $\mathbb{I}(x_{>k}) \geq \mathbb{I}(\bar{x}_{>k})$.
    Analogous to the above,
    \[
        \mathbb{I}(x)
        = x_0 + p(x_0) x_1 + \dots + p(x_0) \cdots p(x_k) \mathbb{I}(x_{>k})
        \geq \bar{x}_0 + p(\bar{x}_0) \bar{x}_1 + \dots + p(\bar{x}_0) \cdots p(\bar{x}_k) \mathbb{I}(\bar{x}_{>k})
        = \mathbb{I}(\bar{x}).
    \]
    By induction, the argument extends to any $k$.
\end{proof}

An immediate implication of Proposition~\ref{PR:flatten} is that the profile that maximizes welfare (without equilibrium constraints) is constant.
For such profiles $(c,c,\dots)$, we have
\[
    \mathbb{W}(c,c,\dots)
    = \frac{1 - c}{1 - p(c)},
\]
which is positive and single-peaked (Lemma~\ref{LE:math}, Appendix) for $c \leq 1$.
We then label its unique maximizer, the \textbf{first-best investment}, by $c^{FB} \in (0,1)$.
Maximizing the above with respect to $c$ yields that $c^{FB}$ is the unique $c \in (0,1)$ such that 
\[
    \mathbb{W}(c,c,\dots) = \frac{g(c)}{p(c)}.
\]

\begin{remark}[First-best investments cannot be supported] \label{REM:firstbest}
    As $p$ is concave, $g(c) > c$, for all $c>0$.
    In particular, for $c = c^{FB} > 0$, 
    \[
        \frac{g(c)}{p(c)} 
        = \frac{1 - c}{1 - p(c)} 
        > \frac{1 - g(c)}{1 - p(c)} \iff g(c) > p(c).
    \]
    It follows by Remark~\ref{REM:constant} that $(c^{FB},c^{FB},\dots)$ cannot be supported.
    $\hfill \circ$
\end{remark}

Next, Example~\ref{EX:overunder} makes two observations:
first, the initiator never invests more than $c^{FB}$ in equilibrium;
second, there are equilibria in which \emph{everyone else} invests more than $c^{FB}$.
The first part is complementary to Theorem~\ref{TH:equilibriumExistence}, which relied on individual, potentially fairly high bounds $B_i$ on equilibrium investments.
Example~\ref{EX:overunder} tightens the initiator's bound to $c^{FB}$.
On the other hand, the second part complements Example~\ref{EX:unboundedEq}.
To contrast, Example~\ref{EX:unboundedEq} identifies an equilibrium in which all but a finite number of agents invest more than $\mathcal{B}$, where the bound $\mathcal{B}$ may be arbitrarily high. 
Example~\ref{EX:overunder} considers the particular case $\mathcal{B} = c^{FB}$ and shows that, with a mild condition on $p$, there is an equilibrium in which all but the initiator invest more than $\mathcal{B}$.

\begin{example}[Equilibrium vs first-best investment] \label{EX:overunder}
    Given the two different observations, the example is split in two parts.
    First, we show that the initiator never invests more than $c^{FB}$ in equilibrium.

    \medskip
    \noindent \textsc{Part I:} \textit{Initiator bound.}
    Let $x$ be an equilibrium profile.
    By Lemma~\ref{OBS:aggregate}, $\mathbb{V}(x) - 1 \geq \mathbb{G}(x)$. 
    By construction, $g(x_i) \geq x_i$.
    Thus,
    \[
        \mathbb{G}(x)
        = g(x_0) + \sum_{j = 1}^{+\infty} \prod_{i = 0}^{j-1} p(x_i) g(x_j)
        \geq g(x_0) + \sum_{j = 1}^{+\infty} \prod_{i = 0}^{j-1} p(x_i) x_j
        = g(x_0) + p(x_0) \mathbb{I}(x_{>0}).
    \]
    Hence, $\mathbb{V}(x) - 1 = p(x_0) \mathbb{V}(x_{>0}) \geq g(x_0) + p(x_0) \mathbb{I}(x_{>0})$.
    Rearrange and use that $c^{FB}$ maximizes $\mathbb{W}$:
    \[
        \frac{g(x_0)}{p(x_0)} 
        \leq \mathbb{V}(x_{>0}) - \mathbb{I}(x_{>0}) 
        = \mathbb{W}(x_{>0})
        \leq \mathbb{W}(c^{FB}, c^{FB}, \dots)
        = \frac{g(c^{FB})}{p(c^{FB})}.
    \]
    As $g/p$ is increasing, we have $x_0 \leq c^{FB}$.

    \medskip
    \noindent \textsc{Part II:} \textit{Equilibrium overinvestment.}
    Let again $f$ be the rule $f^{JP}$ defined in Example~\ref{EX:rules} and $x$ be an equilibrium under $f$.
    Recall from Example~\ref{EX:unboundedEq} that, for each agent $i > 0$,
    \[
        \frac{g(x_i)}{p(x_i)} 
        = R_i(x_{>i},f) - f(i,i)
        = (1 - p(x_{i+1})) (i+1).
    \]
    Intuitively, if $p$ is ``close to zero'', then $g(x_i) / p(x_i) \approx i + 1$ whereas $g(c^{FB}) / p(c^{FB}) = (1 - c^{FB}) / (1 - p(c^{FB})) \approx 1$.
    As $g/p$ is increasing, we would then have that $x_i > c^{FB}$ for each agent $i > 0$.
    
    For a concrete example, set $\varepsilon = \sqrt{2} / 2$, so $p$ is bounded by $1 - \sqrt{2} / 2 \approx 0.3$.
    Then, for $i > 0$,
    \[
        (1 - p(x_{i+1})) (i+1)
        > 2 (1 - (1 - \sqrt{2}/2)) 
        = \sqrt{2}.
    \]
    The inequality is strict as $p$ is increasing.
    On the other hand,
    \[
        \frac{g(c^{FB})}{p(c^{FB})}
        = \frac{1 - c^{FB}}{1 - p(c^{FB})}
        \leq \frac{1}{1 - p(c^{FB})}
        < \frac{1}{1 - (1 - \sqrt{2} / 2)}
        = \sqrt{2}. 
    \]
    Hence, $x_i > c^{FB}$ for each agent $i > 0$.
    $\hfill \circ$
\end{example}

Although the first-best investment is a natural benchmark, our interest is primarily in \emph{equilibrium} profiles.
To that end, Remark~\ref{REM:firstbest} showed that the first-best investments cannot be supported.
Still, the conclusion of Proposition~\ref{PR:flatten}, that constant-tail profiles are cost effective, will remain very useful if we can show that flattening the tail of an \emph{equilibrium} profile results in a supportable profile.
Proposition~\ref{PR:cheaper} is a step in this direction:
it shows that constant-tail profiles reduce the ``cost of incentivizing'' the profile.
The proof parallels that of Proposition~\ref{PR:flatten}, but exploits that $g$ is convex rather than $p$ concave.
    
\begin{proposition}[Constant-tail profiles are cheaper to incentivize] \label{PR:cheaper}
    For each $x \in X$, $k \in \mathbb{N}_0$, and $\bar{x} = (x_0, \dots, x_{k-1}, c, c, \dots) \in X$ such that $\mathbb{V}(x) = \mathbb{V}(\bar{x})$, we have $\mathbb{G}(x) \geq \mathbb{G}(\bar{x})$.
\end{proposition}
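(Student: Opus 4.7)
The plan is to parallel the proof of Proposition~\ref{PR:flatten}, replacing Jensen's inequality applied to the concave function $p$ with Jensen's inequality applied to the convex function $g$, and then invoking Proposition~\ref{PR:flatten} itself at a critical step. The key observation is that, setting $w_j \equiv \prod_{i=0}^{j-1} p(x_i)$ with $w_0 \equiv 1$, we may write $\mathbb{V}(x) = \sum_{j \geq 0} w_j$, $\mathbb{I}(x) = \sum_{j \geq 0} w_j x_j$, and $\mathbb{G}(x) = \sum_{j \geq 0} w_j g(x_j)$. Hence $\{w_j / \mathbb{V}(x)\}_{j \geq 0}$ is a probability distribution under which $\mathbb{I}(x)/\mathbb{V}(x)$ is the mean of $x_j$ and $\mathbb{G}(x)/\mathbb{V}(x)$ is the mean of $g(x_j)$.

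For the base case $k=0$, the profile $\bar x = (c, c, \dots)$ is constant, so $\mathbb{G}(\bar x)/\mathbb{V}(\bar x) = g(c)$ and $\mathbb{I}(\bar x)/\mathbb{V}(\bar x) = c$. Convexity of $g$ together with Jensen's inequality yield
\[
    \frac{\mathbb{G}(x)}{\mathbb{V}(x)} \;\geq\; g\left( \frac{\mathbb{I}(x)}{\mathbb{V}(x)} \right).
\]
By Proposition~\ref{PR:flatten}, $\mathbb{I}(x) \geq \mathbb{I}(\bar x) = c\,\mathbb{V}(\bar x) = c\,\mathbb{V}(x)$, so $\mathbb{I}(x)/\mathbb{V}(x) \geq c$. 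Since $g$ is increasing, $g(\mathbb{I}(x)/\mathbb{V}(x)) \geq g(c) = \mathbb{G}(\bar x)/\mathbb{V}(\bar x)$, and multiplying through by $\mathbb{V}(x) = \mathbb{V}(\bar x)$ delivers $\mathbb{G}(x) \geq \mathbb{G}(\bar x)$.

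For the inductive step, I would mirror the corresponding step in Proposition~\ref{PR:flatten}. Assume the statement holds up to some $k$, and consider $k+1$ with $\bar x = (x_0, \dots, x_k, c, c, \dots)$. The common first $k+1$ entries factor out of both $\mathbb{V}(x)$ and $\mathbb{V}(\bar x)$, so $\mathbb{V}(x) = \mathbb{V}(\bar x)$ reduces to $\mathbb{V}(x_{>k}) = \mathbb{V}(\bar x_{>k})$. Applying the base-case argument to the tails $x_{>k}$ and $\bar x_{>k} = (c, c, \dots)$ yields $\mathbb{G}(x_{>k}) \geq \mathbb{G}(\bar x_{>k})$. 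Since $\mathbb{G}(x)$ and $\mathbb{G}(\bar x)$ share the same finite prefix $\sum_{j=0}^{k} w_j g(x_j)$ and the tails enter through the common factor $w_{k+1}$, this lifts to $\mathbb{G}(x) \geq \mathbb{G}(\bar x)$, completing the induction.

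The main subtlety is that, despite the parallel structure with Proposition~\ref{PR:flatten}, the argument here genuinely depends on Proposition~\ref{PR:flatten}: Jensen applied to $g$ produces $g(\mathbb{I}(x)/\mathbb{V}(x))$, and one needs the bound $\mathbb{I}(x)/\mathbb{V}(x) \geq c$ to compare this with $g(c)$. A minor technical point is to ensure Jensen's inequality is valid for a possibly infinite sum: the weights sum to the finite quantity $\mathbb{V}(x) \leq 1/\varepsilon$, and in the degenerate case $\mathbb{I}(x) = \infty$ monotonicity and unboundedness of $g$ already force $\mathbb{G}(x) = \infty \geq \mathbb{G}(\bar x)$ trivially, so no genuine measurability issue arises.
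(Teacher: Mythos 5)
Your proof is correct and follows essentially the same route as the paper's: Jensen's inequality applied to the convex function $g$ with weights $w_j/\mathbb{V}(x)$, combined with the bound $\mathbb{I}(x)/\mathbb{V}(x) \geq c$ from Proposition~\ref{PR:flatten} and monotonicity of $g$ for the base case, then the identical factor-out-the-common-prefix induction for general $k$. The only additions are your explicit remarks on the infinite-sum validity of Jensen and the degenerate case $\mathbb{I}(x)=\infty$, which the paper leaves implicit.
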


\begin{proof}
    Say first $k = 0$, which corresponds to $\bar{x} = (c, c, \dots)$.
    By Proposition~\ref{PR:flatten}, $\mathbb{I}(x) \geq \mathbb{I}(\bar{x})$, where $\mathbb{I}(\bar{x}) = c \mathbb{V}(\bar{x}) = c \mathbb{V}(x)$.
    Hence, $c \leq \mathbb{I}(x) / \mathbb{V}(x)$.
    As $g$ is increasing, $g(c) \leq g \left ( \mathbb{I}(x) / \mathbb{V}(x) \right)$.
    As $g$ is convex,
    \begin{align*}
        g \left ( \frac{\mathbb{I}(x)}{\mathbb{V}(x)} \right) 
        &= g \left ( \frac{x_0 + p(x_0) x_1 + p(x_0) p(x_1) x_2 + \dots}{1 + p(x_0) + p(x_0) p(x_1) + \dots} \right) \\
        &\leq \frac{g(x_0) + p(x_0) g(x_1) + p(x_0) p(x_1) g(x_2) + \dots}{1 + p(x_0) + p(x_0) p(x_1) \dots}. 
    \end{align*}
    Hence, as $\mathbb{V}(x) = \mathbb{V}(\bar{x})$,
    \[
        g(c) 
        \leq g \left ( \frac{\mathbb{I}(x)}{\mathbb{V}(x)} \right )
        \leq \frac{\mathbb{G}(x)}{\mathbb{V}(\bar{x})}
        \implies
        \mathbb{G}(\bar{x}) = \mathbb{V}(\bar{x}) g(c) \leq \mathbb{G}(x).
    \]

    Assume now the statement is correct up to some $k \geq 0$.
    Consider the case $k+1$, so $x_0, \dots, x_k = \bar{x}_0, \dots, \bar{x}_k$.
    By construction,
    \[
        \mathbb{V}(x)
        = 1 + p(x_0) + \dots + p(x_0) \cdots p(x_k) \mathbb{V}(x_{>k})
        = 1 + p(\bar{x}_0) + \dots + p(\bar{x}_0) \cdots p(\bar{x}_k) \mathbb{V}(\bar{x}_{>k})
        = \mathbb{V}(\bar{x}).
    \]
    Hence, $\mathbb{V}(x_{>k}) = \mathbb{V}(\bar{x}_{>k})$.
    We can now reapply the argument above but with respect to the profiles $x_{>k}$ and $\bar{x}_{>k}$ to conclude that $\mathbb{G}(x_{>k}) \geq \mathbb{G}(\bar{x}_{>k})$.
    Analogous to the above,
    \begin{align*}
        \mathbb{G}(x)
        &= g(x_0) + p(x_0) g(x_1) + \dots + p(x_0) \cdots p(x_k) \mathbb{G}(x_{>k}) \\
        &\geq g(\bar{x}_0) + p(\bar{x}_0) g(\bar{x}_1) + \dots + p(\bar{x}_0) \cdots p(\bar{x}_k) \mathbb{G}(\bar{x}_{>k})
        = \mathbb{G}(\bar{x}).
    \end{align*}
    By induction, the argument extends to any $k$.
\end{proof}

The assumption that $g$ is convex is essential for Proposition~\ref{PR:cheaper} and the results that will follow.
If $g$ is not convex, then there can be non-constant equilibrium profiles for which the associated constant profile cannot be supported.
Example~\ref{EX:g} illustrates this point.

\begin{example}[The case of $g$ non-convex] \label{EX:g}
    Let $p$ be a smoothed version of the function illustrated in Figure~\ref{FIG:EX:g}.
    We first show that $x = (3/4, 1/4, 1/4, \dots)$ is supported by the rule $f$ as defined in Figure~\ref{FIG:EX:g}.
    
    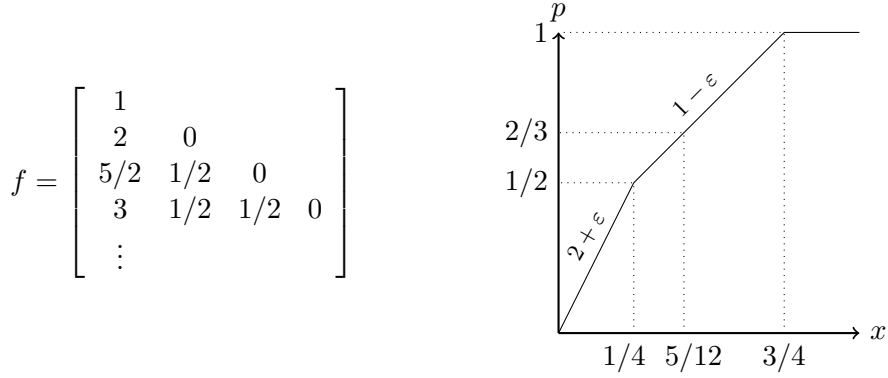
\begin{figure}[!bht]
        \centering
        \begin{tikzpicture}
            \node at (0,2) {
                \[
                    f = \left [ \begin{array}{cccc}
                            1 \\
                            2 & 0 \\
                            5/2 & 1/2 & 0 \\
                            3 & 1/2 & 1/2 & 0 \\
                            \vdots
                    \end{array} \right ]
                \]
            };
            \draw [thick,->] (0,0) -- (4,0) node [right] {$x$};
            \draw [thick,->] (0,0) -- (0,4) node [above] {$p$};
            \draw (0,0) -- (1,2) node [rotate=60,midway,above] {\quad \footnotesize $2+\varepsilon$} -- (3,4) node [rotate=45,midway,above] {\footnotesize $1 - \varepsilon$} -- (4,4);
            \draw [dotted] (1,0) node [below] {1/4 \ \ } -- (1,2) -- (0,2) node [left] {1/2};
            \draw [dotted] (3,0) node [below] {3/4} -- (3,4) -- (0,4) node [left] {$1$};
            \draw [dotted] (5/3,0) node [below] {\ \ 5/12} -- (5/3,8/3) -- (0,8/3) node [left] {2/3};
        \end{tikzpicture}
        \caption{Success rate $p$ with approximate slopes indicated above the lines.}
        \label{FIG:EX:g}
    \end{figure}
    
    We first verify for agents $i > 0$.
    By Lemma~\ref{OBS:individual}, we should have $g(x_i) / p(x_i) = R_i(x_{>i}, f) - f(i,i) = 1/2$.
    By definition, $g(x_i) / p(x_i) = 1 / p'(x_i)$.
    As it is possible to smoothen $p$ such that $p'(1/4) = 2$, we obtain that $x_i = 1/4$ indeed is optimal.
    Note also that $p(x_1) = p(x_2) = \dots = 1/2$.
    
    For agent $0$, on the other hand,
    \begin{align*}
        R_0(x_{>0}, f) - f(0,0)
        = 1 + p(x_1) \cdot (1/2) + p(x_1) p(x_2) \cdot (1/2) + \dots
        = 1 + 1/4 + 1/8 + \dots
        = 3/2.
    \end{align*}
    Again, we can smooth $p$ such that $p'(3/4) = 2/3$.
    Hence, we conclude that $f$ supports $x$.
    Moreover, 
    \[
        \mathbb{V}(x) 
        = 1 + p(x_0) + p(x_0) p(x_1) + \dots
        = 1 + 1 + 1/2 + 1/4 + \dots
        = 3.
    \]
    
    The corresponding constant profile $\bar{x} = (c, c, \dots)$ is such that
    \[
        \mathbb{V}(c,c,\dots) = \frac{1}{1 - p(c)} = 3.
    \]
    That is, $p(c) = 2/3$, so $c = 5/12$.
    Again, we can smooth $p$ such that $p'(c) < 1$, which means $p(c) < g(c)$.
    By Remark~\ref{REM:constant}, such a constant profile $\bar{x}$ cannot be supported.
    $\hfill \circ$
\end{example}

\section{Optimality} \label{SEC:optimality}

In this section, we focus on results that characterize the optimal investment profiles that can be supported in equilibrium. In particular, we look for welfare maximizing and initiator maximizing equilibrium profiles. The former objective is natural if we imagine the rule being designed by a (benevolent) social planner, while the latter is natural if designed by a (payoff-maximizing) initiator.

\subsection{Socially optimal equilibrium}
We already know, from the structural results in Section 3.2, that flattening investment profiles is welfare improving. As already noted, the first-best investment profile, $c^{FB}$, is constant. Yet, $c^{FB}$ cannot be supported in equilibrium (see Remark 2). The question is therefore whether there exists an equilibrium profile $x^*$ that maximizes $\mathbb{W}$ among all equilibrium profiles. Theorem \ref{TH:socialopt} below answers this affirmatively and further shows that such a profile is unique and can be supported by the canonical equal split rule, $f^{ES}$.

\begin{theorem} \label{TH:socialopt}
There exists a unique welfare-maximizing equilibrium profile. It is the constant profile $x^* = (c^*, c^*, \dots)$, where $c^* > 0$ is such that $g(c^*) = p(c^*)$. Moreover, $x^*$ is supported by $f^{ES}$.
\end{theorem}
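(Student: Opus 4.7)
The plan is to reduce the problem to a one-variable optimization over constant equilibrium profiles by exploiting Propositions~\ref{PR:flatten} and~\ref{PR:cheaper}, and then to verify directly via Lemma~\ref{OBS:individual} that $f^{ES}$ implements the resulting maximizer $x^*$.

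First, I would pin down $c^*$. The ratio $g/p = 1/p'$ is continuous, strictly increasing, vanishes as $c \to 0^+$ (because $p'(c) \to \infty$), and by Remark~\ref{REM:firstbest} strictly exceeds $1$ at $c^{FB}$; hence there is a unique $c^* \in (0, c^{FB})$ with $g(c^*) = p(c^*)$. By Remark~\ref{REM:constant}, the supportable constant profiles are exactly those $(c,c,\dots)$ with $c \leq c^*$. On this interval $\mathbb{W}(c,c,\dots) = (1-c)/(1-p(c))$ is single-peaked at $c^{FB} > c^*$, hence strictly increasing, which identifies $x^* = (c^*,c^*,\dots)$ as the unique welfare maximizer among constant equilibrium profiles.

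Second, I would show that no equilibrium beats $x^*$. For an arbitrary equilibrium $x$, let $c$ satisfy $\mathbb{V}(c,c,\dots) = \mathbb{V}(x)$ and set $\bar{x} = (c,c,\dots)$. Proposition~\ref{PR:flatten} gives $\mathbb{W}(x) \leq \mathbb{W}(\bar{x})$, and combining Lemma~\ref{OBS:aggregate} with Proposition~\ref{PR:cheaper} yields
\[
    \frac{p(c)}{1-p(c)} = \mathbb{V}(x) - 1 \geq \mathbb{G}(x) \geq \mathbb{G}(\bar{x}) = \frac{g(c)}{1-p(c)},
\]
forcing $p(c) \geq g(c)$ and hence $c \leq c^*$; combined with the first step, $\mathbb{W}(x) \leq \mathbb{W}(\bar{x}) \leq \mathbb{W}(x^*)$. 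To verify that $x^*$ is itself an equilibrium under $f^{ES}$, I would apply Lemma~\ref{OBS:individual}: for each $i \geq 1$, $(f^{ES}(i,k))_{k>i}$ is constantly $1$ so $R_i = 1$ and the first-order condition $g(x_i)/p(x_i) = 1$ forces $x_i = c^*$; for the initiator, $f^{ES}(0,0) = 1$ and $f^{ES}(0,k) = 2$ for $k \geq 1$ give $R_0 = 2$ at a $c^*$-tail, and $g(x_0)/p(x_0) = 2 - 1 = 1$ again gives $x_0 = c^*$.

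For uniqueness, if $x$ is an equilibrium with $\mathbb{W}(x) = \mathbb{W}(x^*)$, then the chain above must be tight everywhere; equality at the last step forces $c = c^*$ by strict single-peakedness, and equality $\mathbb{G}(x) = \mathbb{G}(\bar{x})$ in Proposition~\ref{PR:cheaper} must then force $x$ to coincide with its constant flattening. This last implication is what I expect to be the main obstacle: it requires the Jensen-type inequality underlying Proposition~\ref{PR:cheaper} to be strict off constant profiles, which rests on strict convexity of $g$ (consistent with the assumption $p'(c) \to \infty$ at zero, but worth tracking carefully through the infinite sums and the possibility of zero coordinates).
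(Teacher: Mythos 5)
Your proposal follows essentially the same route as the paper: flatten an arbitrary equilibrium via Propositions~\ref{PR:flatten} and~\ref{PR:cheaper}, use Lemma~\ref{OBS:aggregate} to conclude $p(c)\geq g(c)$ for the constant flattening so that it is supportable by Remark~\ref{REM:constant}, and then maximize $\mathbb{W}(c,c,\dots)$ over $c\leq c^*$ using single-peakedness, with the direct Lemma~\ref{OBS:individual} check that $f^{ES}$ implements $x^*$. Your explicit handling of uniqueness (tightness of the whole chain at $c=c^*$, plus the need for strictness in the Jensen-type inequality off constant profiles) is in fact more careful than the paper's proof, which leaves that step implicit.
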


\begin{proof}
    Let $x \in X$ be an equilibrium and let $\bar{x} = (c,c,\dots) \in X$ be such that $\mathbb{V}(\bar{x}) = \mathbb{V}(x)$.
    By Proposition~\ref{PR:flatten}, $\mathbb{I}(x) \geq \mathbb{I}(\bar{x})$ and, thus, $\mathbb{W}(x) \leq \mathbb{W}(\bar{x})$.
    By Lemma~\ref{OBS:aggregate}, $\mathbb{V}(x) - 1 \geq \mathbb{G}(x)$.
    By Proposition~\ref{PR:cheaper}, $\mathbb{G}(x) \geq \mathbb{G}(\bar{x})$.
    Hence, $\mathbb{V}(\bar{x}) - 1 \geq \mathbb{G}(\bar{x})$, or, equivalently, 
    \[
        \frac{p(c)}{ (1 - p(c))} \geq \frac{g(c)}{ (1 - p(c))}.
    \]
    That is, $p(c) \geq g(c)$.
    Thus, by Remark~\ref{REM:constant}, $\bar{x}$ can also be supported.
    Hence, for the purpose of maximizing welfare, it suffices to check constant profiles.
    For such profiles, 
    \[
        \mathbb{W}(c,c,\dots) = \frac{1 - c} {1 - p(c)}.
    \]
    Now, such a function is single-peaked for $c \in [0,1]$ and maximized at $c^{FB} > c^*$, where $c^*$ is such that $g(c^*) = p(c^*)$.\footnote{See Lemma~\ref{LE:math} in the Appendix for a formal proof.}
    Therefore, $\mathbb{W}(c,c,\dots)$ is maximized at $c^*$ for $c \in [0,c^*]$. Finally, by Remark~\ref{REM:constant}, the equal split rule $f^{ES}$ induces $x^* = (c^*, c^*, \dots)$ as unique equilibrium.
\end{proof}

 Even though $x^*$ is uniquely determined, $f^{ES}$ is not unique in supporting $x^*$, as shown below. 

\begin{example}[Multiplicity of optimal rules] \label{EX:mult-optrules}
For each $\left\vert\beta\right\vert \leq 1$, let $f^\beta \in F$ be defined as $f^{ES}$ everywhere, except for the following:
\begin{align*}
    f^\beta(0,2) &= 2 - \beta p(c),  
    f^\beta(0,3) = f^\beta(0,4) = \dots = 2 + \beta (1 - p(c)) \\
    f^\beta(1,2) &= 1 + \beta p(c),
    f^\beta(1,3) = f^\beta(1,4) = \dots = 1 - \beta (1 - p(c)) .
\end{align*}
With $\beta = 0$, we obtain $f^\beta = f^{ES}$.
As $\left\vert\beta\right\vert \leq 1$, $f^\beta(i,j) \geq 0$ for all $i$ and $j$.
Moreover, as only agents $0$ and $1$ are affected compared to $f^{ES}$ and $f^\beta(0,j) + f^\beta(1,j) = f^{ES}(0,j) + f^{ES}(1,j)$, $f^\beta$ is well-defined.
For each of exposition, let $p \equiv p(c)$. Then,
\begin{align*}
    R_0(x_{>0}^*, f^\beta)
    &= (1 - p) f(0,1) + p (1 - p) f(0,2) + p^2 (1 - p) f(0,3) + \dots \\
    &= (1 - p) \cdot 2 + p (1 - p) (2 - \beta p) + p^2 (1 - p) (1 + p + p^2 + \dots) (2 + \beta (1 - p)) \\
    &= 2 - 2p + 2 p - 2p^2 - \beta p^2 (1 - p) + 2 p^2 + \beta p^2 (1 - p) = 2 = R_0(x_{>0}^*,f^{ES}).
\end{align*}
Similarly,
\begin{align*}
    R_1(x^*,f^\beta)
    &= (1 - p) f(1,2) + p (1 - p) f(1,3) + p^2 (1 - p) f(1,4) + \dots \\
    &= (1 - p) (1 + \beta p) + p (1 - p) (1 + p + p^2 + \dots) (1 - \beta (1 - p) ) \\
    &= 1 - p + \beta p (1 - p) + p - \beta p (1 - p) = 1 
    = R_1(x^*,f^{ES}).
\end{align*}
The other agents are unaffected by the change.
Therefore, for each agent $i$, we have $R_i(x_{>i}^*,f^\beta) = R_i(x_{>i}^*,f^{ES})$.
As both $x^*$ and $f(i,i)$, for each $i$, are unchanged, we have $U_i(x^*,f^\beta) = U_i(x^*,f^{ES}) = \hat{U}_i(x^*_i,f^{ES}) = \hat{U}_i(x^*_i,f^\beta)$. Thus, $(x^*, f^\beta)$ is an equilibrium. \hfill $\circ$
\end{example}

In spite of what Example \ref{EX:mult-optrules} shows, there are other aspects in which the equal split rule $f^{ES}$ is unique. First, it the only rule that supports $x^*$ in equilibrium for which no information about the success rate $p(\cdot)$ is used. This seems particularly convenient for practical implementation because, by choosing $f^{ES}$ to implement the socially optimal profile $x^*$, an external benevolent planner can be completely unaware of the actual success rate and still be sure to induce the right equilibrium (as long as $p(\cdot)$ is known by the agents). Moreover, $f^{ES}$ is also the only rule that supports $x^*$ for which no agent experiences a decrease in payoff at some point in the process. 
To show this, take any rule $f \in F$.
Then,
\begin{align*}
    R_0(x_{>0}^*,f)
    &= (1 - p) f(0,1) + p (1 - p) f(0,2) + p^2 (1 - p) f(0,3) + \dots \\
    &\geq (1 + p + p^2 + \dots) (1 - p) f(0,1) = f(0,1) = 2,
\end{align*}
where the inequality is strict if $f(0,i) > f(0,1)$ for some $i > 1$.
Once we have pinned down agent $0$'s payoffs, we can turn to agent $1$.
We have
\begin{align*}
    R_1(x^*,f)
    &= (1 - p) f(1,2) + p (1 - p) f(1,3) + p^2 (1 - p) f(1,4) + \dots \\
    &\geq (1 + p + p^2 + \dots) (1 - p) f(1,2) = f(1,2) = 3 - f(0,2) = 1,
\end{align*}
where, again, the inequality is strict if $f(1,i) > f(1,2)$ for some $i > 2$.
Proceeding this way, we find that $f = f^{ES}$.

Finally, we note that common to all the rules that support $x^*$ is that $f(i,i) = 0$ for all $i > 0$. Intuitively, rules for which $f(i,i)>0$, for some $i>0$, reduce the ``budget" for incentivizing agents: in this sense, $f(i,i)$ can be viewed as ``dead capital" that could have been used to incentivize the creation of more value. 

\subsection{Initiator optimal equilibrium} \label{SUB:characterization}
We now turn to the other optimality criterion we consider, and ask whether there exists an equilibrium profile that maximizes the initiator's expected payoff $U_0$, among all equilibrium profiles. Imagine, for instance, that the initiator is allowed to choose the rule ex ante to her own advantage, setting the process in motion herself. Intuitively, one would expect that the initiator would choose a rule for which the expected value of the subgame starting from agent 1 is maximized, and indeed, it turns out that initiator optimal profiles will have the form of near-constant profiles, i.e., $(x_0, c, c, \dots)$ where all agents invest the same, except possibly the initiator herself.

We therefore need to examine the conditions under which near-constant profiles can be supported. Proposition~\ref{PR:nearconstant} does exactly that. Notice, up front, that the condition is satisfied, for instance, when $x_0 \geq c$.

\begin{proposition}[Flattening to a near-constant profile can maintain equilibrium] \label{PR:nearconstant}
    Let $x \in X$ with $x_0 > 0$ be an equilibrium profile and define $c \geq 0$ and $\bar{x} = (x_0, c, c, \dots) \in X$ such that $\mathbb{V}(\bar{x}) = \mathbb{V}(x)$.
    Then, $\bar{x}$ can be supported if and only if
    \[
        \frac{g(c)}{p(c)} - 2 \leq \frac{g(x_0)}{p(x_0)}.
    \]
\end{proposition}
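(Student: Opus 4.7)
The plan is to prove both directions of the biconditional by leveraging the aggregate identity of Lemma~\ref{OBS:aggregate}. Applied to $\bar{x}$ and simplified with its stationary tail $(c, c, \ldots)$, the identity reduces to
\[
    \frac{g(x_0)}{p(x_0)} = \frac{1 - g(c)}{1 - p(c)} - \sum_{i=1}^\infty p(c)^{i-1} f(i, i).
\]
Thus whether $\bar{x}$ can be supported reduces to whether a valid rule $f$ can realize the corresponding value of the diagonal sum $\Sigma \equiv \sum_{i \geq 1} p(c)^{i-1} f(i, i)$. Note that the automatic upper bound $g(x_0)/p(x_0) \leq (1-g(c))/(1-p(c))$ (equivalently $\Sigma \geq 0$) is already ensured by combining Propositions~\ref{PR:flatten} and~\ref{PR:cheaper} with Lemma~\ref{OBS:aggregate} applied to the original equilibrium $x$, so only the lower bound from the proposition needs attention.

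For the \textbf{necessity} direction, I would derive the upper bound $\Sigma \leq 2 + \frac{p(c) - g(c)}{p(c)(1 - p(c))}$; substituted back into the identity, this gives exactly $g(x_0)/p(x_0) \geq g(c)/p(c) - 2$. The constant ``$2$'' comes directly from $f(1, 1) \leq 2$, a consequence of the row-$1$ balance $f(0, 1) + f(1, 1) = 2$ together with $f(0, 1) \geq 0$, so that the $i=1$ term in $\Sigma$ contributes at most $2$. The correction term $\frac{p(c) - g(c)}{p(c)(1-p(c))}$ arises by aggregating, over $i \geq 2$, the equilibrium equations $R_i = g(c)/p(c) + f(i, i)$ with the row-$k$ balance constraints $\sum_j f(j,k) = k+1$ for $k \geq 2$. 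These encode the trade-off that incentivizing $f(i, i) > 0$ for $i \geq 2$ forces off-diagonal mass in column $i$ that consumes row-$k$ budgets and so limits later diagonal values.

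For the \textbf{sufficiency} direction, I would construct a supporting rule explicitly. A natural starting candidate is the family $f^{\alpha, \gamma}$ from Example~\ref{EX:rules}: setting $\alpha = g(c)/p(c) + \gamma$ (from the equilibrium condition for $i \geq 1$) and $\gamma = 1 - g(c) - (1-p(c)) g(x_0)/p(x_0)$ (from the identity above), the hypothesis $g(c)/p(c) - 2 \leq g(x_0)/p(x_0)$ translates precisely to $\gamma \leq 2$, i.e., $f(0, 1) \geq 0$. In the regime $g(c)/p(c) > 1$, which would force $\alpha > 1$ and thus violate $f(0, k) \geq 0$ for large $k$, the rule is modified by treating agent $1$ asymmetrically (choosing $f(1,1) = \gamma_1$ separately from $f(i,i) = \gamma$ for $i \geq 2$) and allowing $k$-dependent off-diagonal entries that spread the required mass across rows with larger budgets.

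The \textbf{main obstacle} is proving the necessity bound on $\Sigma$ tightly: the interplay between the column-wise equilibrium constraints (which link $f(i,i)$ to the entire column $\{f(i,k)\}_{k>i}$) and the row-wise balance constraints (which intermix all columns within a row) must be handled with care to obtain exactly the bound $2 + (p(c)-g(c))/(p(c)(1-p(c)))$ rather than something looser. On the sufficiency side, the construction in the high-$g(c)/p(c)$ regime is also delicate because standard fixed-fraction rules become infeasible and the off-diagonal mass must be distributed non-trivially across rows.
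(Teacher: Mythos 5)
Your setup is sound: the identity $\frac{g(x_0)}{p(x_0)} = \frac{1-g(c)}{1-p(c)} - \Sigma$ with $\Sigma = \sum_{i \geq 1} p(c)^{i-1} f(i,i)$ follows correctly from Lemma~\ref{OBS:aggregate}, and your observation that the upper bound $\frac{g(x_0)}{p(x_0)} \leq \frac{1-g(c)}{1-p(c)}$ is automatic from Propositions~\ref{PR:flatten}--\ref{PR:cheaper} applied to the original equilibrium is exactly the paper's argument for that half. The genuine gap is in your necessity argument. The condition $\frac{g(c)}{p(c)} - 2 \leq \frac{g(x_0)}{p(x_0)}$ has bite only when $\frac{g(c)}{p(c)} > 2$ (otherwise it follows from $g(x_0)/p(x_0) > 0$), and in that regime $g(c) > p(c)$, so your proposed correction term $\frac{p(c)-g(c)}{p(c)(1-p(c))}$ is strictly negative. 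But $\sum_{i \geq 2} p(c)^{i-1} f(i,i) \geq 0$ always, so the decomposition ``the $i=1$ term contributes at most $2$, the $i \geq 2$ terms contribute at most the correction'' is impossible precisely where it is needed: the two pieces cannot be bounded separately. (Indeed, rules with $f(i,i) = \F > 0$ for all $i \geq 1$ can support near-constant profiles with $g(c)/p(c) > 2$, making the $i\ge2$ sum strictly positive; the overall bound on $\Sigma$ survives only because $f(1,1)$ is then far below $2$.) The coupling you are missing is the one the paper uses in Lemma~\ref{LE:characterization}, Part I: combine the aggregate identity (dropping only the $j \geq 2$ diagonal terms, keeping $f(1,1)$) with agent $0$'s \emph{individual} first-order condition $\frac{g(x_0)}{p(x_0)} = R_0 - 1 \geq (1-p(c)) f(0,1) - 1$ and the row-$1$ balance $f(0,1) + f(1,1) = 2$; the $f(1,1)$ terms then cancel and the bound drops out. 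Your outline never invokes column $0$ at all for the correction term, only columns $i \geq 2$ and rows $k \geq 2$, which cannot deliver it.

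The sufficiency direction also has a problem in the same regime. Your choice $\F = 1 - g(c) - (1-p(c))\frac{g(x_0)}{p(x_0)}$ makes $\F \leq 2$ automatic (it is equivalent to $\frac{g(x_0)}{p(x_0)} \geq \frac{-1-g(c)}{1-p(c)}$, which always holds), so the hypothesis does \emph{not} translate to $\F \leq 2$; the binding feasibility constraint for $f^{\alpha,\F}$ is $\alpha = \frac{g(c)}{p(c)} + \F \leq 1$ (else $f(0,k) < 0$ for large $k$), which confines your explicit construction to the regime $\frac{g(c)}{p(c)} \leq 1$ where the lower-bound hypothesis is vacuous anyway. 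The case where the hypothesis actually matters, $\frac{g(c)}{p(c)} > 2$, is exactly the one you leave as a sketch; the paper handles it with the explicit rules $f^3$ and $f^4$ of Lemma~\ref{LE:characterization} (which treat agents $0$ and $1$ asymmetrically and whose non-negativity, e.g.\ $f^3(0,2) = 2 - \beta - \F \geq 0$, is verified using precisely the lower-bound hypothesis), together with Lemma~\ref{LE:convex} to interpolate. So both directions need the missing ingredient---agent $0$'s column interacting with row $1$---made explicit before the argument closes.
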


\begin{proof}
    Let $x$ be an equilibrium profile with $x_0 > 0$.
    By Lemma~\ref{LE:characterization} in the Appendix, a generic near-constant profile $(x_0, c, c, \dots)$ can be supported if and only if
    \[
        \frac{g(c)}{p(c)} - 2 \leq \frac{g(x_0)}{p(x_0)} \leq \frac{1 - g(c)}{1 - p(c)}.
    \]
    Define now $\bar{x} = (x_0, c, c, \dots)$ with $c \geq 0$ such that $\mathbb{V}(\bar{x}) = \mathbb{V}(x)$.
    To complete the proof, it remains to show that the upper bound is satisfied.
    That is, if $\bar{x} = (x_0, c, c, \dots)$ is created on the basis of an equilibrium profile $x$, then $$\frac{g(x_0) } {p(x_0)} \leq \frac{1 - g(c)} {1 - p(c)}.$$
    
    To show that, note that, as $x$ is an equilibrium profile, it follows from  Lemma~\ref{OBS:aggregate} that $\mathbb{V}(x) - 1 \geq \mathbb{G}(x)$.
    By Proposition~\ref{PR:cheaper}, $\mathbb{G}(x) \geq \mathbb{G}(\bar{x})$.
    Hence, 
    \[
        \mathbb{G}(\bar{x}) 
        = g(x_0) + \frac{p(x_0) g(c)}{1 - p(c)}
        \leq \mathbb{G}(x)
        \leq \mathbb{V}(x) - 1
        = \mathbb{V}(\bar{x}) - 1
        = \frac{p(x_0)}{1 - p(c)}.
    \]
    Thus,
    \[
        g(x_0) \leq \frac{p(x_0)(1- g(c))}{1 - p(c)},
    \]
    as desired.
\end{proof}

If the inequality in the statement of Proposition~\ref{PR:nearconstant} 
is not satisfied, then flattening an equilibrium profile results in a profile that cannot be supported.
Example~\ref{EX:nearconstantNotEq} illustrates this point.

\begin{example}[Flattened equilibrium profile cannot be supported] \label{EX:nearconstantNotEq}
    Let the profile $x$ be supported by the rule $f$ defined below.
    As $f$ is symmetric for agents $2, 3, \dots$, we have $x_2 = x_3 = \dots$, which we use in the equilibrium condition for $i \geq 2$ below.
    \[
        f = \left[ \begin{array}{cccccc} 
            1 \\
            2 & 0 \\
            0 & 3 & 0 \\
            0 & 2 & 2 & 0 \\
            0 & 2 & 1 & 2 & 0 \\
            0 & 2 & 1 & 1 & 2 & 0 \\
            \vdots
        \end{array} \right] 
        \implies
        \frac{g(x_i)}{p(x_i)} =
        R_i(x_{>i},f) - f(i,i) =
        \begin{cases}
            1 - 2 p(x_1) & \text{for } i = 0 \\
            3 - p(x_2) & \text{for } i = 1 \\
            2 - p(x_2) & \text{for } i \geq 2 
        \end{cases}
    \]
    Let the success rate be given by%
    \footnote{Strictly speaking, $p$ is not bounded by $1 - \varepsilon$ here.
    The function chosen here is relatively easy to work with, but the same intuition would hold for instance for $\tilde{p}(x) = (1 - \varepsilon) p(x)$.}
    \[
        p(x) = \frac{\sqrt{x}}{1 + \sqrt{x}}.
    \]
    This yields $x_2 \approx .1777$, $x_1 \approx .3106$, and $x_0 \approx .0131$.
    The near-constant profile $\bar{x} = (x_0, c, c, \dots)$ with $\mathbb{V}(\bar{x}) = \mathbb{V}(x)$ has $c \approx .2588$.
    Then $g(x_0) / p(x_0) \approx .2842 < .3160 \approx g(c) / p(c) - 2$.
    By Proposition~\ref{PR:nearconstant}, $\bar{x}$ cannot be supported. \hfill $\circ$
    \end{example}

We are now ready to state our third main result.

\begin{theorem} \label{TH:initiator}
There exists a unique initiator optimal equilibrium profile. It is the near-constant profile $x^\circ = (x_0^\circ, c^\circ, c^\circ,\dots)$, where $c^\circ$ and $x_0^\circ$ are such that $g'(c^\circ)(1-p(c^\circ)) = p'(c^\circ)(1-g(c^\circ))$ and $g(x_0^\circ)(1-p(c^\circ))= p(x_0^\circ)(1-g(c^\circ))$. Moreover, $x^{\circ}$ is supported by $f^\alpha$, with $\alpha = \frac{g(c^\circ)}{p(c^\circ)}.$
\end{theorem}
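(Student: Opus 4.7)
By Lemma~\ref{OBS:individual}, at any equilibrium $x$ the initiator's expected payoff simplifies to $U_0(x,f) = f(0,0) + g(x_0) - x_0 = 1 + g(x_0) - x_0$. Differentiating $g(x) = p(x)/p'(x)$ gives $g'(x) = 1 - p(x) p''(x) / p'(x)^2 \geq 1$ because $p'' \leq 0$, so $g(x_0) - x_0$ is non-decreasing in $x_0$. My plan is therefore to identify the largest $x_0$ compatible with some equilibrium, and then exhibit a rule that supports a profile attaining it.

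First, I would bound $x_0$ via the structural results of Section~\ref{SEC:structural}. Lemma~\ref{OBS:aggregate} gives $\mathbb{V}(x) - 1 \geq \mathbb{G}(x)$; isolating the initiator's contribution yields
\[
    \frac{g(x_0)}{p(x_0)} \leq \mathbb{V}(x_{>0}) - \mathbb{G}(x_{>0}).
\]
By Propositions~\ref{PR:flatten} and~\ref{PR:cheaper}, if $\bar{x}_{>0} = (c, c, \dots)$ is the constant profile with $\mathbb{V}(\bar{x}_{>0}) = \mathbb{V}(x_{>0})$, the right-hand side is bounded by $(1 - g(c))/(1 - p(c)) \equiv \psi(c)$. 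Since $p'(c) \to \infty$ as $c \to 0$, $g(c) = p(c)/p'(c) < p(c)$ for small $c > 0$, so $\psi(c) > \psi(0) = 1$; and because $g(c) > c$ grows unboundedly while $1 - p(c) \geq \varepsilon$, $\psi(c) \to -\infty$ eventually. Hence $\psi$ attains a maximum at some interior $c^\circ > 0$, and the first-order condition $\psi'(c^\circ) = 0$ is exactly $g'(c^\circ)(1 - p(c^\circ)) = p'(c^\circ)(1 - g(c^\circ))$. Defining $x_0^\circ$ by $g(x_0^\circ)/p(x_0^\circ) = \psi(c^\circ)$ (which cross-multiplies to the second condition in the theorem) yields the candidate profile $x^\circ = (x_0^\circ, c^\circ, c^\circ, \dots)$.

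Next, I would verify that $f^\alpha$ with $\alpha = g(c^\circ)/p(c^\circ)$ supports $x^\circ$. For each $i \geq 1$, column $i$ of $f^\alpha$ pays $\alpha$ on every row $k > i$ and $0$ on the diagonal, so $R_i(x_{>i}, f^\alpha) = \alpha$ regardless of the successors' play, and Lemma~\ref{OBS:individual} forces $g(x_i)/p(x_i) = \alpha$, i.e., $x_i = c^\circ$. For agent $0$, a direct calculation with the geometric series under $x_{>0} = (c^\circ, c^\circ, \dots)$ yields $R_0(x_{>0}^\circ, f^\alpha) - f^\alpha(0,0) = \psi(c^\circ) = g(x_0^\circ)/p(x_0^\circ)$, so Lemma~\ref{OBS:individual} confirms $x_0^\circ$ is optimal; the lower bound in Proposition~\ref{PR:nearconstant} is automatic here because the upper bound binds.

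The main obstacle is uniqueness. Attaining $g(x_0)/p(x_0) = \psi(c^\circ)$ requires equality throughout Lemma~\ref{OBS:aggregate} and Propositions~\ref{PR:flatten}--\ref{PR:cheaper}, which by strict convexity of $g$ on the range of positive investments forces the tail $x_{>0}$ to be constant, and it requires $c = c^\circ$, which follows from $\psi$ being strictly quasi-concave near its interior maximum. Together these pin the profile down to $x^\circ$. Making the relevant strict versions of the convexity and concavity bounds explicit from the regularity of $p$ and $g$ is the step that requires the most care.
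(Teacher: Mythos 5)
Your proposal is correct, and it reaches the result by a genuinely more direct route than the paper. The paper first proves Proposition~\ref{PR:nearconstant} (flattening an equilibrium to a near-constant profile preserves supportability, with a case split on $x_0 \geq c$ versus $x_0 < c$), then invokes the full two-sided characterization of supportable near-constant profiles in Lemma~\ref{LE:characterization}, and finally optimizes over that set. You instead extract a single universal inequality valid at \emph{every} equilibrium --- $g(x_0)/p(x_0) \leq \mathbb{V}(x_{>0}) - \mathbb{G}(x_{>0}) \leq (1-g(c))/(1-p(c)) \leq \psi(c^\circ)$, combining Lemma~\ref{OBS:aggregate} with Proposition~\ref{PR:cheaper} --- and then verify directly that $f^\alpha$ attains the bound; this bypasses Proposition~\ref{PR:nearconstant} and Lemma~\ref{LE:characterization} entirely and makes the existence-plus-optimality part cleaner (the paper needs the lower bound of Lemma~\ref{LE:characterization} only to certify supportability, which you get for free by exhibiting the rule). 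The underlying analytic ingredients are the same: single-peakedness of $\psi(c) = (1-g(c))/(1-p(c))$ is exactly Lemma~\ref{LE:math} with $h = g$, and your first-order condition matches the paper's. Two loose ends, both minor: you should note that $\alpha = g(c^\circ)/p(c^\circ) < 1$ (which follows since $\psi(c^\circ) \geq \psi(c^*) = 1$ forces $c^\circ < c^*$, hence $g(c^\circ) < p(c^\circ)$), so that $f^\alpha$ is a well-defined nonnegative rule; and the strictness issue you flag for uniqueness (equality in Proposition~\ref{PR:cheaper} forcing a constant tail requires strict convexity of $g$, resp.\ strict concavity of $p$) is real, but the paper's own proof leaves exactly the same step implicit, so you are not below its level of rigor there.
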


\begin{proof}
    Let $x$ be an equilibrium profile. Then, the initiator's payoff is $1 + g(x_0) - x_0$.
    By definition, $g(0) = 0$.
    By construction, $g(x_0) > x_0$ for $x_0 > 0$.
    As $g$ is convex, it follows that, among equilibrium profiles, the initiator's payoff is increasing in $x_0$.
    Next, we argue that, for the purpose of maximizing the initiator's equilibrium payoff, it suffices to consider near-constant profiles $(x_0, c, c, \dots)$ with $x_0 \geq c$.

    Let $x$ be an arbitrary equilibrium profile and define $\bar{x} = (x_0, c, c, \dots)$ with $\mathbb{V}(\bar{x}) = \mathbb{V}(x)$. Moreover, let $y$ be a constant profile such that 
     $\mathbb{V}(\bar x) = \mathbb{V}(y)$.
    If $x_0 \geq c$, then, by Proposition~\ref{PR:nearconstant}, $\bar x$ can be supported. 
    If instead $x_0 < c$, then $y > x_0$, and by Proposition~\ref{PR:nearconstant}, $y$ can be supported.
    In this way, the initiator's payoff is at least as high at the equilibrium profiles $\bar x$ and $y$ as at $x$.
    Hence, for the purpose of maximizing the initiator's equilibrium payoff, it suffices to consider profiles of the form $(x_0,c,c,\dots)$ with $x_0 \geq c$.
    
    By Lemma~\ref{LE:characterization} in the appendix, a profile $(x_0,c,c,\dots)$ with $x_0 \geq c$ can be supported in equilibrium if and only if
    \[
        \frac{1}{p'(x_0)} = \frac{g(x_0)}{p(x_0)} \leq \frac{1 - g(c)}{1 - p(c)}.
    \]
    As $p$ is concave in $x_0$, $p'$ is decreasing and $1/p'$ is increasing in $x_0$.
    Therefore, we must have equality at an initiator-optimal profile, as otherwise we could increase $x_0$ and make the initiator better off.
    Similarly, the right-hand side must be maximized with respect to $c$, as otherwise we could find a different ``continuation'' $c'$ to support $x_0' > x_0$ such that $(x_0',c',c',\dots)$ could be made an equilibrium under which the initiator is better off.
    
    To conclude, let $d$ be such that $g(d) = 1$.
    By Lemma~\ref{LE:math} in the appendix, $\frac{1 - g(c)} {1 - p(c)}$ is single-peaked on $[0,d]$.
    The maximizer $c$ then satisfies 
    \[
        p'(c) (1 - g(c)) = g'(c) (1 - p(c)),
    \]
    which is precisely the condition defining $c^\circ$.
    The level $x_0^\circ$ then follows from
    \[
        \frac{g(x^\circ_0)}{p(x^\circ_0)} = \frac{1 - g(c^\circ)}{1 - p(c^\circ)}. \qedhere
    \]
\end{proof}

As in the case of the socially optimal equilibrium profile, the initiator optimal profile $x^\circ$ can be supported by multiple rules: even multiple rules for which no agent experiences a decrease in payoff at some point in the process \cite[see also][]{hougaard2022optimal}. Common to all the supporting rules is their dependence of the success rate. Yet, as $p$ is common knowledge among the agents, the initiator can easily construct the appropriate rule for implementation.  

\bigskip
The example below illustrates the set of potential near-constant equilibrium profiles as well as compare the socially, and initiator optimal profiles. 

\begin{example} (Illustration of Theorems 2 and 3). Consider the specific success rate 
\[
    p(x) = \frac{\sqrt{x}}{1+\sqrt{x}}.
\]
Figure 2 illustrates the set of near-constant equilibrium profiles as those under the red curve and above the blue curve (in particular, notice that $\bar x$, from Example \ref{EX:nearconstantNotEq} above, is below the blue curve and hence cannot be supported). The red curve corresponds to the constraint $\mathbb{V}(x) - 1 \geq \mathbb{G}(x)$. The blue curve corresponds to the constraint given in Proposition \ref{PR:nearconstant}: The intersection between the red and blue curve is at the value $c$ such that $\frac{g(c)}{p(c)} = 3 - 2 p(c)$.

The socially optimal profile $c^*$ is determined by the tangent between the red curve and the highest level curve of $\mathbb{W}(x_0,c,c,\dots)$. In particular, we get $x_0=c=c^*$. As the initiator's payoff is increasing in her own investment, $x_0$, the initiator optimal profile $x^\circ$ is determined by the value of $c^\circ$ that maximizes $x_0$ on the red curve. Clearly, $x_0^\circ > c^\circ$.

 \begin{figure}[!htb]
        \centering
        \begin{tikzpicture}[] %
            \begin{axis}[
                width=.8\textwidth,
                height=.36\textwidth,
                xmin = 0,
                xmax = 0.3,
                ymin = 0,
                ymax = 0.12,
                ticks = none,
                axis x line = bottom,
                axis y line = left,
                xlabel = {$c$},
                ylabel = {$x_0$},
                style={thick},
                every axis x label/.style={at={(current axis.right of origin)},anchor=west},
                every axis y label/.style={at={(current axis.north west)},above},
                ]
                \node [fill, black, circle, inner sep = 1.5pt] at (axis cs: .2588, .0131) {};
                \node [fill, white, circle, inner sep = 1pt] at (axis cs: .2588, .0131) {};
                \node [fill, black, circle, inner sep = .5pt] at (axis cs: .2588, .0131) {};
                \node [below] at (axis cs: .2588, .0131) {$\bar{x}$};
                \draw [dotted] (axis cs: .088, .088) -- (axis cs: .088, 0) node [xshift=.7em, yshift=.7em] {$c^*$};
                \draw [dotted] (axis cs: .026, .099) -- (axis cs: .026, 0) node [xshift=.7em, yshift=.7em] {$c^\circ$};
                \addplot[black!50, dashed] table [skip first n=1, x index = {1}, y index = {1}, col sep=comma] {Data/data1.csv};
                \addplot[red] table [skip first n=1, x index = {1}, y index = {2}, col sep=comma] {Data/data1.csv};
                \addplot[blue] table [skip first n=1, x index = {1}, y index = {3}, col sep=comma] {Data/data1.csv};
                \addplot[black!50,thin] table [skip first n=70, x index = {1}, y index = {2}, col sep=comma] {Data/data3.csv};
            \end{axis}
        \end{tikzpicture}
        \caption{The set of equilibrium near-constant profiles $(x_0, c, c, \dots)$ are above the blue and below the red curve.
        The success rate is $p(x) =  \frac{\sqrt{x}}{1+\sqrt{x}}$.}
        \label{FIG:counterexample}
    \end{figure}
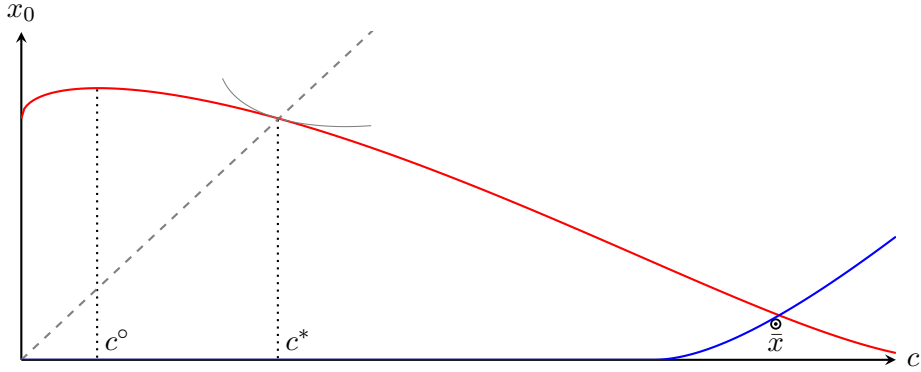

\end{example}

\section{Optimal rules with endogenous budget constraints} \label{SEC:budget}

In the baseline model studied so far, agents are not budget constrained and can make arbitrarily large investments.
Indeed, as Example~\ref{EX:unboundedEq} showed, even equilibrium investments might not be bounded. Yet, there may be situations where agents have limited access to funds. In this section, we take this to its extreme and assume that the only funds that an agent has available for investment are the ones she has been awarded through the value-creating process. We model this through the payments $f(i,i)$, so positive values of $f(i,i)$ now become essential as a budget limitation. That is, $x_i \leq f(i,i)$. And, as these funds are set aside for agent $i$ throughout the process, in effect, we restrict attention to rules such that $f(i,i) \leq f(i,j)$ for all $i$ and $j>i$.
 
Now, the concepts defined in Section~\ref{SEC:model} naturally extend to this scenario.
The profile $x$ is a {\bf self-financed} equilibrium profile in the game induced by the rule $f\in F$ if, for each agent $i$, the investment $x_i\leq f(i,i) \leq f(i,j)$ maximizes $U_i(x_{\geq i}, f)$ given the investment $x_{\geq i}$ of those who succeed $i$ under $f$.

Revisiting Lemmas~\ref{OBS:individual} and~\ref{OBS:aggregate} for the case of self-financed equilibria, we first notice that now we may have an equilibrium with corner solution $x_i = f(i,i)$ such that the first order condition is not satisfied. In this case, $i$ is ``under-investing'' given her expected payoff. That is, $R_i(x_{>i}, f) - f(i,i) \geq \frac{g(x_i)} {p(x_i)}$. Thus, in a self-financed equilibrium, $\mathbb{V}(x) \geq \hat{\mathbb{V}}(x,f)$.
Moreover, as $f(i,i) \geq x_i$, we now have
\[
    \mathbb{V}(x)
    \geq \hat{\mathbb{V}}(x,f)
    \geq 1 + p(x_0) \mathbb{I}(x_{>0}) + \mathbb{G}(x)
    \geq \mathbb{I}(x) + \mathbb{G}(x).
\]

In terms of socially optimal equilibria for our baseline model, Theorem~\ref{TH:socialopt} showed that supporting rules are designed to make all agents face identical investment decisions.
But in the present case of budget constraints, there is a significant difference between the initiator and the successive agents.
Regardless of the rule, the initiator can invest up to $f(0,0) = 1$ by construction, while all other agents will generally face a tighter budget constraint $f(i,i) < 1$.
As such, we see that the socially optimal self-financed equilibrium has a more similar structure to the initiator-optimal equilibrium without budget constraints from Theorem~\ref{TH:initiator} (which was a near-constant profile where the initiator invests more). 

\begin{theorem}\label{TH:SF-eq}
There exists a welfare-maximizing self-financed equilibrium profile. It is a near-constant profile $x^s= (x_0^s,c^s,c^s,\dots)$ determined by the program
\[
    \max_{x_0,c} \mathbb{W}(x_0,c,c\dots) = 1-x_0+\frac{p(x_0)(1-c)}{1-p(c)} \ \  \mbox{such that} \ \ \frac{g(x_0)}{p(x_0)}= \frac{1-c-g(c)}{1-p(c)},
\]
Moreover, $x^s$ is supported by the rule $f^{\alpha,\F}$ with $\alpha = \frac{g(c^s)}{p(c^s)}+c^s$ and $\F = c^s.$
\end{theorem}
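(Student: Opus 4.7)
The proof parallels those of Theorems~\ref{TH:socialopt} and~\ref{TH:initiator}, adapted to the self-financing setting. The decisive structural feature is the asymmetry between the initiator and the successive agents: the initiator has free budget $f(0,0)=1$, while each agent $i\geq 1$ can only invest out of $f(i,i)$, an amount effectively set aside and therefore unavailable for creating future value. This asymmetry rules out a fully constant optimum and naturally produces a near-constant profile in which the initiator invests strictly more than everyone else.

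First, I would reduce welfare maximization over self-financed equilibria to near-constant profiles. Given any self-financed equilibrium $(x,f)$, let $\bar x=(x_0,c,c,\dots)$ with $\mathbb{V}(\bar x)=\mathbb{V}(x)$. Proposition~\ref{PR:flatten} gives $\mathbb{I}(\bar x)\leq \mathbb{I}(x)$, hence $\mathbb{W}(\bar x)\geq \mathbb{W}(x)$, while Proposition~\ref{PR:cheaper} gives $\mathbb{G}(\bar x)\leq \mathbb{G}(x)$. Combined with the self-financing inequality $\mathbb{V}(x)\geq \mathbb{I}(x)+\mathbb{G}(x)$ derived just before the theorem, this yields $\mathbb{V}(\bar x)\geq \mathbb{I}(\bar x)+\mathbb{G}(\bar x)$. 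I would then establish a self-financed counterpart of Proposition~\ref{PR:nearconstant} showing that this aggregate inequality, together with $x_0\leq 1$, is also sufficient for $\bar x$ to be supported as a self-financed equilibrium, via the explicit rule identified in Step~2.

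Second, I would work with the candidate rule $f^{\alpha,\F}$ with $\F=c$ and $\alpha=g(c)/p(c)+c$. Setting $\F=c$ is the smallest payoff that finances investment $c$ for each agent $i\geq 1$, minimizing ``dead capital''; setting $\alpha-\F=g(c)/p(c)$ aligns each such agent's interior first-order condition $g(x_i)/p(x_i)=\alpha-\F$ with $x_i=c$, so $c$ is a best response at the tight budget. A geometric-series computation of $R_0$ at $x_{>0}=(c,c,\dots)$ gives $R_0-1=(1-c-g(c))/(1-p(c))$, so the initiator's first-order condition $g(x_0)/p(x_0)=R_0-1$ becomes exactly the constraint
\[
    \frac{g(x_0)}{p(x_0)}=\frac{1-c-g(c)}{1-p(c)}
\]
in the statement. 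Direct computation gives the welfare of a near-constant profile as $\mathbb{W}(x_0,c,c,\dots)=1-x_0+p(x_0)(1-c)/(1-p(c))$, so the program in the theorem is precisely the optimization of welfare over near-constant SF equilibria supported by $f^{\alpha,\F}$. Combined with the reduction in Step~1, its solution $x^s$ is the welfare-maximizing self-financed equilibrium; the supporting rule is $f^{\alpha,\F}$ with $\alpha=g(c^s)/p(c^s)+c^s$ and $\F=c^s$. One should also verify that $f^{\alpha,\F}(0,k)=k+1-(k-1)\alpha-\F\geq 0$ for every $k$, which at the optimum holds provided $\alpha\leq 1$.

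The main obstacle is the sufficiency direction in Step~1, the self-financed counterpart of Proposition~\ref{PR:nearconstant}. The necessary direction follows immediately from Propositions~\ref{PR:flatten} and~\ref{PR:cheaper}. For sufficiency one must construct a rule that both supports $\bar x$ and satisfies the self-financing constraints $x_i\leq f(i,i)$ and $f(i,i)\leq f(i,j)$ for $j>i$, along with balance and non-negativity. The candidate $f^{\alpha,\F}$ from Step~2 is the natural vehicle, but aligning its induced $x_0$ (via the initiator's first-order condition) with that of the flattened profile forces the program's constraint to hold, so only near-constant profiles on the constraint curve are actually supportable. Once this reduction is secured, the rest of the proof is direct substitution.
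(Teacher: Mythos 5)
Your overall architecture matches the paper's (flatten to a near-constant profile, characterize supportability, then optimize within the family $f^{\alpha,\F}$), and your Step~2 computations of the first-order conditions and of $\mathbb{W}(x_0,c,c,\dots)$ are correct. But there are two genuine gaps. First, the aggregate inequality you carry through Step~1, $\mathbb{V}(\bar x)\geq\mathbb{I}(\bar x)+\mathbb{G}(\bar x)$, is strictly weaker than what supportability requires. For $\bar x=(x_0,c,c,\dots)$ it amounts to
\[
    \frac{g(x_0)}{p(x_0)}\;\leq\;\frac{1-c-g(c)}{1-p(c)}+\frac{1-x_0}{p(x_0)},
\]
whereas Lemma~\ref{LE:characterization} with $\F=c$ requires $g(x_0)/p(x_0)\leq(1-c-g(c))/(1-p(c))$; the nonnegative slack term $(1-x_0)/p(x_0)$ means your proposed ``self-financed counterpart of Proposition~\ref{PR:nearconstant}'' is false as stated, and no choice of rule can rescue it. The paper avoids this by \emph{not} collapsing $1+p(x_0)\mathbb{I}(x_{>0})$ into $\mathbb{I}(x)$: it keeps the chain $\mathbb{V}(x)\geq\hat{\mathbb{V}}(x,f)\geq 1+p(x_0)\mathbb{I}(x_{>0})+\mathbb{G}(x)\geq 1+p(x_0)\mathbb{I}(\bar x_{>0})+\mathbb{G}(\bar x)$, which delivers exactly the needed upper bound. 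Relatedly, when $x_0<c$ the lower bound of Lemma~\ref{LE:characterization} can fail for $(x_0,c,c,\dots)$; the paper handles that case by flattening fully to a constant profile, a step your proposal omits.

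Second, your closing claim that ``only near-constant profiles on the constraint curve are actually supportable'' is wrong: Lemma~\ref{LE:characterization} yields a two-sided band of supportable pairs $(x_0,c)$, and profiles in its interior are supported by other rules. What justifies restricting the program to the equality $g(x_0)/p(x_0)=(1-c-g(c))/(1-p(c))$ is that $\mathbb{W}(x_0,c,c,\dots)$ is non-decreasing in $x_0$ throughout the feasible band --- the paper verifies $\partial\mathbb{W}/\partial x_0=p'(x_0)\bigl(\tfrac{1-c}{1-p(c)}-\tfrac{g(x_0)}{p(x_0)}\bigr)\geq 0$ there --- so the optimum sits on the upper boundary where the constraint binds. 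Without this monotonicity argument the passage from the band to the curve, and hence to the stated program, is unjustified.
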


\begin{proof}
    We start showing that the search for welfare-maximizing self-financed equilibria can be restricted to near-constant and constant profiles. Let $x$ be a self-financed equilibrium under $f$. Let $c \geq 0$ and $\bar{x} = (x_0, c, c, \dots)$ be such that $\mathbb{V}(\bar{x}) = \mathbb{V}(x)$. We now distinguish two cases, depending on whether the initiator's investment is above or below the (constant) investment all successors make (in the near-constant profile). 
\medskip

    \textsc{Part 1:}
    Suppose that $x_0 \geq c$.
    Then,
    \[
        \mathbb{V}(\bar{x})
        = \mathbb{V}(x)
        \geq \hat{\mathbb{V}}(x,f)
        \geq 1 + p(x_0) \mathbb{I}(x_{>0}) + \mathbb{G}(x)
        \geq 1 + p(x_0) \mathbb{I}(\bar{x}_{>0}) + \mathbb{G}(\bar{x}).
    \]
    Equivalently,
    \[
        1 + \frac{p(x_0)}{1 - p(c)}
        \geq 1 + p(x_0) \cdot \frac{c}{1 - p(c)} + g(x_0) + \frac{p(x_0) g(c)}{1 - p(c)}.
    \]
    Hence,
    \[
        \frac{g(x_0)}{p(x_0)} \leq \frac{1 - g(c) - c}{1 - p(c)}.
    \]
    By Lemma~\ref{LE:characterization} in the appendix, $\bar{x}$ can also be supported in a self-financed equilibrium.
    
    By Proposition~\ref{PR:flatten}, $\mathbb{I}(\bar{x}) \leq \mathbb{I}(x)$ and, thus, $\mathbb{W}(\bar{x}) \geq \mathbb{W}(x)$.
     \medskip
     
    \textsc{Part 2:}
    Suppose that $x_0 < c$.
    Define now $\tilde{c} \geq 0$ and redefine $\bar{x} \equiv (\tilde{c},\tilde{c},\dots)$ such that, again, $\mathbb{V}(\bar{x}) = \mathbb{V}(x)$ (that is, flatten the profile fully now).
    As $\mathbb{V}$ is monotonic, we have $x_0 < \tilde{c} < c$.
    By Proposition~\ref{PR:flatten}, $\mathbb{I}(\bar{x}) \leq \mathbb{I}(x)$.
    That is,
    \[
        \tilde{c} + p(\tilde{c}) \mathbb{I}(\bar{x}_{>0})
        \leq x_0 + p(x_0) \mathbb{I}(x_{>0}).
    \]
    As $x_0 < \tilde{c}$,
    \[
        p(\tilde{c}) \mathbb{I}(\bar{x}_{>0}) < p(x_0) \mathbb{I}(x_{>0}).
    \]
    It follows now that
    \[
        \mathbb{V}(\bar{x})
        = \mathbb{V}(x)
        \geq \hat{\mathbb{V}}(x,f)
        \geq 1 + p(x_0) \mathbb{I}(x_{>0}) + \mathbb{G}(x)
        \geq 1 + p(\tilde{c}) \mathbb{I}(\bar{x}_{>0}) + \mathbb{G}(\bar{x}).
    \]
    Equivalently,
    \[
        \frac{1}{1 - p(\tilde{c})}
        \geq 1 + p(\tilde{c}) \cdot \frac{\tilde{c}}{1 - p(\tilde{c})} + \frac{g(\tilde{c})}{1 - p(\tilde{c})}.
    \]
    Equivalently, 
    \[
        1 \geq 1 - p(\tilde{c}) + p(\tilde{c}) \tilde{c} + g(\tilde{c}).
    \]
    Rearranging,
    \[
        \frac{g(\tilde{c})}{p(\tilde{c})} \leq 1 - \tilde{c}.
    \]
    By Lemma~\ref{LE:characterization} in the appendix, $\bar{x}$ can also be supported in a self-financing equilibrium.
    By Proposition~\ref{PR:flatten}, $\mathbb{I}(x) \geq \mathbb{I}(\bar{x})$ and, thus, $\mathbb{W}(x) \leq \mathbb{W}(\bar{x})$.

    \textsc{Part 3:}
    From Parts~1 and~2, we conclude that, for any self-financed equilibrium profile $x$, the near-constant $\bar{x} = (x_0, c, c, \dots)$ with $\mathbb{V}(\bar{x}) = \mathbb{V}(x)$ can also be supported in self-financed equilibrium and is such that $\mathbb{W}(\bar{x}) \geq \mathbb{W}(x)$.
    That is to say, for the purpose of maximizing $\mathbb{W}$, it suffices to restrict to near-constant profiles.
    For $\bar{x} = (x_0, c, c, \dots)$, we have
    \[
        \mathbb{W}(\bar{x}) = 1 - x_0 + \frac{p(x_0) (1 - c)}{1 - p(c)},
    \]
    which is precisely the objective function of the program.
    By Lemma~\ref{LE:characterization} with $\gamma = c$, $\bar{x} = (x_0, c, c, \dots)$ can be supported if and only if
    \[
        \frac{g(c)}{p(c)} + c - 2 \leq \frac{g(x_0)}{p(x_0)} \leq \frac{1 - g(c) - c}{1 - p(c)}.
    \]
    As $g(c) \geq 0$, we have
    \[
        \frac{1 - c}{1 - p(c)} \geq \frac{1 - g(c) - c}{1 - p(c)} \geq \frac{g(x_0)}{p(x_0)}.
    \]
    Now, differentiate $\mathbb{W}(\bar{x})$ with respect to $x_0$:
    \[
        p'(x_0) \cdot \frac{1 - c}{1 - p(c)} - 1
        = p'(x_0) \cdot \left ( \frac{1 - c}{1 - p(c)} - \frac{1}{p'(x_0)} \right )
        = p'(x_0) \cdot \left ( \frac{1 - c}{1 - p(c)} - \frac{g(x_0)}{p(x_0)} \right )
        \geq 0.
    \]
    Hence, $\mathbb{W}(\bar{x})$ is non-decreasing in $x_0$.
    As $g(x_0)/p(x_0)$ is increasing in $x_0$, it suffices to restrict to profiles of the form $\bar{x} = (x_0, c, c, \dots)$ such that $g(x_0) / p(x_0) = (1 - g(c) - c) / (1 - p(c))$.
    This is precisely the constraint of the program.
\end{proof}

 The example below summarizes and compares optimal equilibrium profiles with, and without, budget constraints. 

\begin{example}(Illustration of Theorem 4)
   Consider (again) the specific success rate 
   \[
        p(x) = \frac{\sqrt{x}}{1+\sqrt{x}}.
    \]
    Compared to Figure 2, the corresponding dashed blue and red curves resulting from adding the budget constraint $x_i \leq f(i,i)$ are shown in Figure 3. It is clear that both curves become ``steeper" by adding the constraint, so the set of potential near-constant equilibria is effectively reduced. In particular, we see that the social optimum is changed as $c^*$ can no longer be supported. We now get that welfare is optimized at $x_0^s \approx .0815$ and $c^s \approx .0724$. So now, in the social optimum, the initiator will invest strictly more than the other agents, as illustrated by adding the level curves of $\mathbb{W}(x_0,c,c,\dots)$. Thus, note that, in both cases we have $x_0\geq c$: it is simply inefficient for the initiator to invest less than the other agents, even though such profiles can be supported in equilibrium.

    \begin{figure}[!htb]
        \centering
        \begin{tikzpicture}[] %
            \begin{axis}[
                width=.8\textwidth,
                height=.36\textwidth,
                xmin = 0,
                xmax = 0.3,
                ymin = 0,
                ymax = 0.12,
                ticks = none,
                axis x line = bottom,
                axis y line = left,
                xlabel = {$c$},
                ylabel = {$x_0$},
                style={thick},
                every axis x label/.style={at={(current axis.right of origin)},anchor=west},
                every axis y label/.style={at={(current axis.north west)},above},
                ]
                \draw [dotted] (axis cs: .088, .088) -- (axis cs: .088, 0) node [xshift=.7em, yshift=.7em] {$c^*$};
                \draw [dotted] (axis cs: .0724, .0815) -- (axis cs: .0724, 0) node [xshift=.7em, yshift=.7em] {$c^s$};
                \addplot[black!50, dashed] table [skip first n=1, x index = {1}, y index = {1}, col sep=comma] {Data/data1.csv};
                \addplot[red] table [skip first n=1, x index = {1}, y index = {2}, col sep=comma] {Data/data1.csv};
                \addplot[blue] table [skip first n=1, x index = {1}, y index = {3}, col sep=comma] {Data/data1.csv};
                \addplot[red,dashed] table [skip first n=1, x index = {1}, y index = {2}, col sep=comma] {Data/data2.csv};
                \addplot[blue,dashed] table [skip first n=1, x index = {1}, y index = {3}, col sep=comma] {Data/data2.csv};
                \addplot[black!50,thin] table [skip first n=70, x index = {1}, y index = {2}, col sep=comma] {Data/data3.csv};
                \addplot[black!50,thin] table [skip first n=55, x index = {1}, y index = {3}, col sep=comma] {Data/data3.csv};
            \end{axis}
        \end{tikzpicture}
        \caption{The set of equilibrium near-constant profiles $(x_0, c, c, \dots)$ with budget constraints. Comparing socially optimal equilibrium profiles.}
        \label{FIG:overview}
    \end{figure}
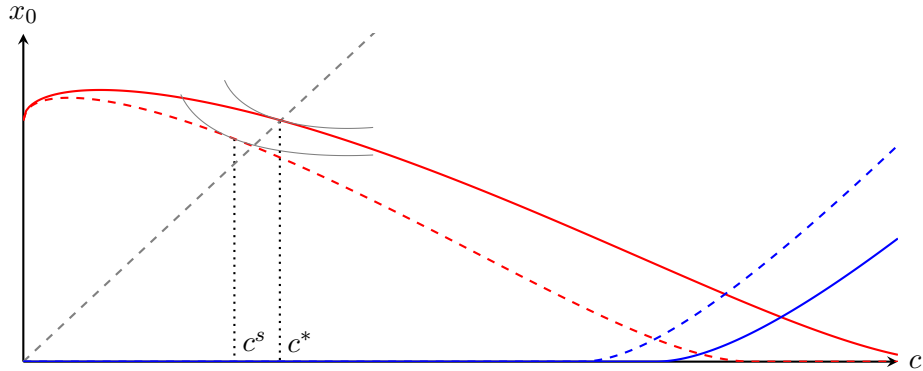
\end{example}

\section{Conclusions}
We have studied in this paper the design of optimal incentives in sequential processes by means of a stylized model of successive incentives. In our model, an initiator invests resources, for instance to recruit a follower, who generates a value, and who can do the same afterwards to continue the process. There exists a common success rate, which is increasing in the investment. Whenever a follower is recruited, it generates a value. Reward rules specify how the generated value in each step of the process is distributed, and thus induce a game among the agents. Our model is general enough to permit a wide variety of reward rules, accommodating multiple options to design successive incentives.

The main message we can take away from our results is that remarkably simple reward rules suffice to induce optimal investment behavior. Such rules treat all agents identically, except possibly for the initiator. While the designer of the reward rule has the option to construct very complicated incentive structures that may involve rewards conditional on reaching future stages in the process and treat agents very differently, this turns out to be unnecessary (albeit such rules may achieve optimal incentives as well). 

This conclusion echoes findings in contract theory where it has long been argued that real world schemes are surprisingly uniform across a wide range of circumstances, and we observe a prevalence of fairly simple contracts \cite[e.g.,][]{hart_holmstrom_1987, chiappori_salanie_2003}. The literature has also argued that simplicity may arise from a desire to offer a contract that is robust to informational assumptions \cite[e.g.,][]{bergemann2005robust, carroll2019robustness}. And, more recently, that single contracts may be optimal in environments with adverse selection and moral hazard when agents are risk-neutral and have limited liability \cite[e.g.,][]{gottlieb2022simple}.

To conclude, we also mention that our basic model is sufficiently general to allow for multiple interpretations. For instance, we stress that our model can be interpreted as one in which agents collaborate solving individual tasks sequentially. More precisely, each agent makes an effort (invests) to (independently) solve a task and generates a reward if the task is successfully solved. The higher the investment, the higher the probability of success (up to a certain common bound). And an agent is granted the chance to solve the task if and only if all predecessors successfully solve their own tasks. Then the issue is to design optimal contracts to allocate the rewards so that the overall amount of rewards obtained is maximized, or so that the expected payoff of the initial agent is maximized. This interpretation helps connect our results to recent contributions in the literature on the design of optimal investment inducing mechanisms in teamwork \cite[e.g.,][]{bernstein2012contracting, winter2004incentives, winter2006optimal, halac2022monitoring}.

\section{Appendix}

Various versions of Lemma~\ref{LE:math} are used throughout the analysis.
For the first-best investment and the socially optimal equilibrium, we use $h(c) = c$;
for the initiator-optimal equilibrium, $h(c) = g(c)$;
and for the socially optimal self-financed equilibrium, $h(c) = g(c) + c$.
The corresponding value $d \geq 0$ is defined through $h(d) = 1$.

\begin{lemma} \label{LE:math}
    Let $d \geq 0$ and $h \colon [0,d] \to [0,1]$ be differentiable, convex, and not identical to $1$.
    Then
    \[
        q(c) \equiv \frac{1 - h(c)}{1 - p(c)}
    \]
    is single-peaked on $[0,d]$.
\end{lemma}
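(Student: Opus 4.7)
The plan is to show that the derivative $q'$ changes sign at most once on $[0,d]$, going from positive to negative.

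First, I would compute $q'(c)$ directly. Writing
\[
    q'(c) = \frac{-h'(c)(1-p(c)) + (1-h(c))\,p'(c)}{(1-p(c))^2},
\]
the sign of $q'(c)$ equals the sign of the numerator
\[
    N(c) \equiv (1-h(c))\,p'(c) - h'(c)(1-p(c)).
\]
So single-peakedness reduces to showing that $N$ changes sign at most once on $[0,d]$, and only from positive to negative.

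Next, I would differentiate $N$. The cross-terms involving $h'p'$ cancel, leaving
\[
    N'(c) = (1-h(c))\,p''(c) - h''(c)(1-p(c)).
\]
Now I invoke the hypotheses: $p$ is concave, so $p''(c)\le 0$, and since $h$ maps into $[0,1]$, $(1-h(c))\ge 0$, giving $(1-h(c))p''(c)\le 0$. Likewise $h$ is convex, so $h''(c)\ge 0$, and $1-p(c)>0$, giving $-h''(c)(1-p(c))\le 0$. Therefore $N'(c)\le 0$ throughout $[0,d]$, so $N$ is non-increasing.

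It remains to identify the sign of $N$ at the endpoints. At $c=0$, we have $p(0)=0$ and $p'(0)=\infty$; moreover, $h(0)<1$, since otherwise convexity of $h$ together with $h\le 1$ and $h(d)=1$ would force $h\equiv 1$, contradicting the hypothesis. Hence the first term of $N(0)$ is $+\infty$ while $h'(0)$ is finite, so $N(0)>0$. At $c=d$, we have $h(d)=1$ (by definition of $d$), so $N(d)=-h'(d)(1-p(d))$. Since $h$ is convex and $h(d)-h(0)>0$, there is some interior point at which $h'$ is positive, and convexity forces $h'(d)\ge h'(c)>0$. Thus $N(d)<0$.

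Because $N$ is continuous and non-increasing with $N(0)>0$ and $N(d)<0$, its zero set is either a single point or a closed interval $[a,b]\subseteq(0,d)$. On $[0,a]$ we have $N\ge 0$, so $q'\ge 0$ and $q$ is non-decreasing; on $[b,d]$ we have $N\le 0$, so $q$ is non-increasing; and on $[a,b]$ (if nontrivial) $q$ is constant. This is precisely the single-peakedness of $q$ on $[0,d]$. The mild subtlety to watch carefully will be the endpoint argument at $c=0$, where one needs the assumption $p'(0)=\infty$ to dominate the finite term $h'(0)(1-p(0))$, and the exclusion $h\not\equiv 1$ to guarantee $1-h(0)>0$; this is the only place the non-triviality of $h$ actually enters the argument.
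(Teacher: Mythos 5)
Your core argument is exactly the paper's: reduce the sign of $q'$ to the numerator $N(c) = (1-h(c))p'(c) - h'(c)(1-p(c))$, differentiate, note the $h'p'$ cross-terms cancel, and use concavity of $p$ and convexity of $h$ to get $N' \le 0$, hence $N$ changes sign at most once from positive to negative. That part is correct and is where the paper stops, since a monotone $N$ already yields single-peakedness (with the peak possibly at an endpoint). Your additional endpoint analysis contains one false step: the claim that $h(0) = 1$ together with convexity, $h \le 1$, and $h(d) = 1$ would force $h \equiv 1$. Convexity only says $h$ lies \emph{below} the chord joining $(0,1)$ and $(d,1)$, so a convex, non-constant $h$ with $h(0) = h(d) = 1$ is perfectly possible (e.g.\ $h(c) = 1 - \sin(\pi c / d)$); moreover, in that degenerate case $N(0)$ becomes the indeterminate form $0 \cdot \infty$ rather than $+\infty$. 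This does not damage the lemma, because single-peakedness does not require an interior peak and your monotonicity argument for $N$ carries the whole weight; but if you want to keep the endpoint discussion, derive $h(0) < 1$ from the specific choices of $h$ used in the paper (all of which satisfy $h(0) = 0$) rather than from convexity.
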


\begin{proof}
    Upon differentiating,
    \[
        q'(c) = \frac{p'(c) \cdot ( 1 - h(c) ) - h'(c) \cdot (1 - p(c) )}{(1 - p(c))^2}.
    \]
    As $p(c) < 1$, $q$ is increasing whenever $p'(c) \cdot (1 - h(c)) - h'(c) \cdot (1 - p(c)) > 0$.
    Upon differentiating this expression:
    \begin{align*}
        & \left ( p'(c) \cdot (1 - h(c)) - h'(c) \cdot (1 - p(c)) \right )' \\
        & \hspace{3em} = p''(c) \cdot (1 - h(c) ) - p'(c) h'(c) - h''(c) \cdot (1 - p(c)) + p'(c) h'(c) \\
        & \hspace{3em} = p''(c) \cdot (1 - h(c) ) - h''(c) \cdot (1 - p(c)).
    \end{align*}
    Here, $p''(c) < 0$, $h(c) \leq 1$, $h''(c) \geq 0$, and $p(c) < 1$, so the expression is negative (except possibly at $c = 0$ or $c = d$, where it may equal zero).
    Hence, $q$ is single-peaked on $[0,d]$.
\end{proof}

\begin{lemma} \label{LE:zeroInitiator}
    For each equilibrium profile $x$ 
    such that $x_0 = 0$, there is an equilibrium profile $\tilde{x}$
    such that $\tilde{x}_0 > x_0$ and $\mathbb{W}(\tilde{x}) > \mathbb{W}(x)$.
\end{lemma}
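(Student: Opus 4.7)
The plan is to observe that any equilibrium with $x_0 = 0$ has welfare exactly $1$, and then to exhibit a single concrete equilibrium profile that beats this value, namely the socially optimal profile $x^*$ from Theorem~\ref{TH:socialopt}. Since the candidate $\tilde{x}$ does not need to depend on $x$ in any way, the argument reduces to two short computations.

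First, I will show that $\mathbb{W}(x) = 1$ whenever $x_0 = 0$. Because $p(0) = 0$, every summand in the series defining $\mathbb{V}(x)$ and $\mathbb{I}(x)$ that corresponds to $j \geq 1$ contains the factor $p(x_0) = 0$ and hence vanishes. Thus $\mathbb{V}(x) = 1$, $\mathbb{I}(x) = x_0 = 0$, and $\mathbb{W}(x) = 1$. (This matches the intuition that with $x_0 = 0$ the process stops at the initiator with probability one, so the investments of the later agents are immaterial.)

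Next, I will take $\tilde{x} \equiv x^* = (c^*, c^*, \dots)$, the unique socially optimal equilibrium profile from Theorem~\ref{TH:socialopt}, where $c^* > 0$ satisfies $g(c^*) = p(c^*)$. This is supported by the rule $f^{ES}$, so $\tilde{x}$ is an equilibrium profile, and clearly $\tilde{x}_0 = c^* > 0 = x_0$. It remains to verify that $\mathbb{W}(\tilde{x}) > 1$. Plugging in the constant profile,
\[
    \mathbb{W}(\tilde{x}) = \frac{1 - c^*}{1 - p(c^*)},
\]
which exceeds $1$ if and only if $p(c^*) > c^*$. This follows immediately from the general inequality $g(c) > c$ for $c > 0$ (noted just before Lemma~\ref{OBS:individual}) applied at $c = c^*$: indeed $p(c^*) = g(c^*) > c^*$. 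Hence $\mathbb{W}(\tilde{x}) > 1 = \mathbb{W}(x)$, completing the proof.

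The only conceptual step is the initial observation that $x_0 = 0$ collapses the process to a single agent and pins down $\mathbb{W}(x)$; once this is in place, invoking Theorem~\ref{TH:socialopt} yields the desired strictly better equilibrium almost for free, with no calculations beyond the one-line comparison $p(c^*) > c^*$. I do not anticipate any real obstacle.
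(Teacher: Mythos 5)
Your proof is correct, but it takes a genuinely different route from the paper's. Both arguments share the same first step --- observing that $p(x_0)=p(0)=0$ kills every term with $j\geq 1$, so $\mathbb{V}(x)=1$, $\mathbb{I}(x)=0$, and $\mathbb{W}(x)=1$ --- but they diverge on how to exhibit the strictly better equilibrium. The paper stays self-contained: it constructs the explicit rule that awards the entire value to the initiator, under which the equilibrium is $\tilde{x}=(\tilde{x}_0,0,0,\dots)$ with $p'(\tilde{x}_0)=1$, and then notes that $\mathbb{W}(\tilde{x})=1+p(\tilde{x}_0)-\tilde{x}_0>1$ because $p$ is concave with $p'\to\infty$ at zero. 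You instead import the socially optimal profile $x^*=(c^*,c^*,\dots)$ from Theorem~\ref{TH:socialopt} and verify $\mathbb{W}(x^*)=(1-c^*)/(1-p(c^*))>1$ via $p(c^*)=g(c^*)>c^*$; this is shorter, but it makes an appendix lemma lean on a main theorem, so you must confirm (as is indeed the case) that the proof of Theorem~\ref{TH:socialopt} nowhere uses Lemma~\ref{LE:zeroInitiator} --- the paper's direct construction sidesteps that dependency question entirely and buys a proof that works even before any optimality results are established. Amusingly, the two witnesses are close relatives: the paper's equilibrium condition gives $g(\tilde{x}_0)/p(\tilde{x}_0)=1$, i.e.\ $p'(\tilde{x}_0)=1$, which is exactly the defining equation of $c^*$, so the paper uses $(c^*,0,0,\dots)$ where you use $(c^*,c^*,\dots)$.
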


\begin{proof}
    Let $\tilde{f} \in F$ award everything to the initiator:
    for each $j \geq 0$, $\tilde{f}(0,j) = j$.
    The associated equilibrium investment profile is $\tilde{x} = (\tilde{x}_0, 0, 0 \dots)$.
    Hence, $\mathbb{W}(\tilde{x}) = 1 + p(\tilde{x}_0) - \tilde{x}_0$.
    This is closely related to the initiator's optimization problem:
    the initiator chooses $\tilde{x}_0$ to maximize $p(\tilde{x}_0) - \tilde{x}_0$.
    Hence, we will have $p'(\tilde{x}_0) = 1$, so $\tilde{x}_0 > 0 = x_0$ and $\mathbb{W}(\tilde{x}) > 1 = \mathbb{W}(x)$.
\end{proof}

\begin{lemma} \label{LE:convex}
    Let $x = (x_0, c, c, \dots)$ 
    and $y = (y_0, c, c, \dots)$ be two equilibrium profiles, and $\lambda \in [0,1]$.
    Then, $\lambda x + (1 - \lambda) y$ is an equilibrium profile.
\end{lemma}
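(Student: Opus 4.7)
My plan is to reduce this to Lemma~\ref{LE:characterization} from the appendix (the characterization invoked in the proof of Proposition~\ref{PR:nearconstant}), which states that a near-constant profile $(w_0, c, c, \dots)$ is supportable in equilibrium if and only if
\[
    \frac{g(c)}{p(c)} - 2 \leq \frac{g(w_0)}{p(w_0)} \leq \frac{1 - g(c)}{1 - p(c)}.
\]
The crucial feature of this characterization is that the two bounds depend only on the tail value $c$, not on $w_0$. Since $x$ and $y$ are equilibrium profiles sharing the common tail $c$, both $g(x_0)/p(x_0)$ and $g(y_0)/p(y_0)$ lie in this interval.

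Next I would observe that $\lambda x + (1-\lambda) y = (z_0, c, c, \dots)$ with $z_0 \equiv \lambda x_0 + (1-\lambda) y_0$ is again a near-constant profile with tail $c$, so it suffices to show that $g(z_0)/p(z_0)$ also lies in the admissible interval. For this I would invoke the identity $g(w_0)/p(w_0) = 1/p'(w_0)$, which is continuous and increasing in $w_0$ because $p$ is concave. Applied to $z_0$, which lies between $x_0$ and $y_0$, this ratio produces a value between $g(x_0)/p(x_0)$ and $g(y_0)/p(y_0)$, and therefore inside the interval $[g(c)/p(c) - 2, (1 - g(c))/(1 - p(c))]$. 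Lemma~\ref{LE:characterization} then delivers the conclusion that $\lambda x + (1-\lambda) y$ is supportable, i.e., is an equilibrium profile.

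The proof is quite short once the characterization is in hand. The only mild subtlety, and the step I expect to require the most care to state cleanly, is recognizing that for fixed tail $c$ the set of admissible initiator investments is an \emph{interval} in $w_0$-space, which follows from the monotonicity of $1/p'$ together with the continuity-based intermediate-value argument. I would also note that the rule supporting $\lambda x + (1 - \lambda) y$ need not be any natural combination of rules supporting $x$ and $y$; the conclusion is merely existence of \emph{some} supporting rule, which is all Lemma~\ref{LE:characterization} asserts.
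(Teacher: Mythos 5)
Your argument has a circularity problem given how the paper is structured: the ``if'' direction of Lemma~\ref{LE:characterization} --- the direction you need to conclude that $(z_0,c,c,\dots)$ is supportable --- is itself proved \emph{using} Lemma~\ref{LE:convex}. In Part~II of the proof of Lemma~\ref{LE:characterization}, the paper only constructs explicit supporting rules for the two extreme profiles $(x_0,c,c,\dots)$ and $(y_0,c,c,\dots)$ attaining the lower and upper bounds, and then invokes Lemma~\ref{LE:convex} to cover every intermediate initiator investment. So you cannot turn around and derive Lemma~\ref{LE:convex} from Lemma~\ref{LE:characterization} without breaking that dependency. (The ``only if'' direction, which you use to place $g(x_0)/p(x_0)$ and $g(y_0)/p(y_0)$ in the admissible interval, is independent of Lemma~\ref{LE:convex} and is fine; the problem is entirely in the sufficiency step.) A secondary, smaller point: Lemma~\ref{LE:characterization} is stated for $x_0>0$, so your reduction would also need to handle the boundary case separately.

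The interval/monotonicity observation you make is correct and is indeed the right intuition, but the paper has to prove the lemma directly, and it does so by exhibiting a supporting rule: it takes the rules $f^x$ and $f^y$ supporting $x$ and $y$ and forms $\tilde f=\alpha f^x+(1-\alpha)f^y$. The key trick --- which your closing remark explicitly disclaims --- is that the supporting rule \emph{is} a convex combination of the two given rules, just not with weight $\lambda$: since $R_i(\cdot,\cdot)$ is linear in the rule, $\alpha$ is chosen so that $\alpha\,g(x_0)/p(x_0)+(1-\alpha)\,g(y_0)/p(y_0)=g(\tilde x_0)/p(\tilde x_0)$, i.e., one interpolates in the $g/p$ coordinate rather than in the investment coordinate. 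Agents $i>0$ are automatic because both profiles share the tail $c$, so $R_i(\tilde x,\tilde f)-\tilde f(i,i)=g(c)/p(c)$ for any $\alpha$. If you want a non-circular proof, this direct construction is the way to go.
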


\begin{proof}
    Let $x$ and $y$ be two equilibrium profiles and $f^x$ and $f^y$ their corresponding supporting rules. 
    It suffices to consider $x \neq y$ and $\lambda \in (0,1)$. 
    Assume, without loss of generality, that $x_0 > y_0$.
    Fix $\lambda \in (0,1)$ and define $\tilde{x} = \lambda x + (1 - \lambda) y$, so $\tilde{x} = (\tilde{x}_0, c, c, \dots)$ with $x_0 > \tilde{x}_0 > y_0$.
    Recall that $\frac{g(\cdot)}{p(\cdot)}$ is increasing and define further 
    \[
        \alpha = 
        \frac{
        \frac{g(\tilde{x}_0)}{p(\tilde{x}_0)} - \frac{g(y_0)}{p(y_0)} 
        }
        {
        \frac{g(x_0)}{p(x_0)} - \frac{g(y_0)}{p(y_0)} 
        } \in (0,1)
    \]
    and $\tilde{f} = \alpha f^x + (1 - \alpha) f^y$.
    That is, $\tilde{f}(i,j) = \alpha f^x(i,j) + (1 - \alpha) f^y(i,j)$.
    For each agent $i > 0$,
    \[
        R_i(\tilde{x},\tilde{f}) - \tilde{f}(i,i)
        = \alpha ( R_i(x, f^x) - f^x(i,i) ) + (1 - \alpha) ( R_i(y, f^y) - f^y(i,i) )
        = \frac{g(c)}{p(c)}.
    \]
    That is, $\tilde{x}_i = c$ is optimal under $\tilde{f}$.
    It remains to verify that $\tilde{x}_0$ is optimal to agent $0$.
    We have 
    \[
        R_0(\tilde{x},\tilde{f})
        = \alpha R_0(\tilde{x},f^x) + (1 - \alpha) R_0(\tilde{x}, f^y). 
    \]
    By design, $f^x(0,0) = f^y(0,0) = \tilde{f}(0,0) = 1$.
    Hence,
    \[
        R_0(\tilde{x},\tilde{f}) - \tilde{f}(0,0)
        = \alpha ( R_0(\tilde{x},f^x) - f^x(0,0) ) + (1 - \alpha) ( R_0(\tilde{x}, f^y) - f^y(0,0) ).
    \]
    As $f^x$ and $f^y$ support $x$ and $y$ respectively, 
    $R_0(x,f^x) - f^x(0,0) = g(x_0) / p(x_0)$ and $R_0(y,f^y) - f^y(0,0) = g(y_0) / p(y_0)$. 
    Hence, as desired,
    \[
        R_0(\tilde{x},\tilde{f}) - \tilde{f}(0,0)
        = \alpha \cdot \frac{g(x_0)}{p(x_0)} + (1 - \alpha) \cdot \frac{g(y_0)}{p(y_0)}
        = \frac{g(\tilde{x}_0)}{p(\tilde{x}_0)}. \qedhere
    \]
\end{proof}

Next, we turn to the characterization result.
This can be understood as follows.
Fix the investment $c \geq 0$ for agents $i > 0$.
In \textsc{Part I} of the proof, we show that there is a lower and an upper bound, say $x_0$ and $y_0$, on agent $0$'s investment for the profile to be supportable.
In \textsc{Part II}, we construct rules to support $x = (x_0, c, c, \dots)$ and $y = (y_0, c, c, \dots)$;
by Lemma~\ref{LE:convex}, all intermediate $\tilde{x} = (\tilde{x}_0, c, c, \dots)$ can also be supported.
The construction will be based on the following four types of rules:

\begin{alignat*}{3}
    &f^1 = \left[ \begin{array}{cccc} 
        1 \\
        2 - \F & \F \\
        3 - \alpha - \F & \alpha & \F \\
        4 - 2 \alpha - \F & \alpha & \alpha & \F \\
        \vdots
    \end{array} \right] 
    \quad &&
    f^2 = \left[ \begin{array}{cccc}
        1 \\
        \alpha - \F & 2 - \alpha + \F \\
        \alpha - \F & 2 & 1 - \alpha + \F \\
        \alpha - \F & 2 & 1 & 1 - \alpha + \F \\
        \vdots 
    \end{array} \right]
    \\
    &f^3 = \left[ \begin{array}{cccc}
        1 \\
        2 - \F & \F \\
        2 - \beta - \F & 1 + \beta & \F \\
        2 - \beta - \F & 1 & 1 + \beta & \F \\
        \vdots 
    \end{array} \right]
    \quad &&
    f^4 = \left[ \begin{array}{ccccc} 
        1 \\
        \beta & 2 - \beta \\
        0 & 3 - \F & \F \\
        0 & 3 - \beta - \F & 1 + \beta & \F \\
        0 & 3 - \beta - \F & 1 & 1 + \beta & \F \\
        \vdots 
    \end{array} \right]
\end{alignat*}

\begin{table}[!htb]
    \centering
    \renewcommand{\arraystretch}{2}
    \begin{tabular}{ccc} \toprule
        $\ell$ & $R_0(x,f^\ell) - f^\ell(0,0)$ & $R_i(x,f^\ell) - f^\ell(i,i)$ \\ \midrule
        1 & $\displaystyle \frac{1 - \alpha p(c)}{1 - p(c)} - \F$ & $\alpha - \F$ \\ 
        2 & $\alpha - 1 - \F$ & $\alpha - \F$ \\ 
        3 & $1 - \beta p(c) - \F$ & $1 + \beta ( 1 - p(c) ) - \F$ \\ 
        4 & $\beta ( 1 - p(c) ) - 1$ & $1 + \beta ( 1 - p(c) ) - \F$ \\
        \bottomrule
    \end{tabular}
    \caption{Expected returns from investment for each rule $f^\ell$.
    Agent $i$ refers to $i > 0$.}

    \label{TAB:summary}
\end{table}

Table~\ref{TAB:summary} summarizes the expected returns from investment.
Specifically, for agent $0$ at $f^1$ (using that $1 + 2p + 3p^2 + \dots = 1 / (1 - p)^2$), we have
\[
    R_0(x,f^1) - f^1(0,0) = \frac{1 - \alpha p(c)}{1 - p(c)} - \F.
\]
In all other cases, we have $f^\ell(i,i+2) = f^\ell(i,i+3) = \dots$;
then 
\[
    R_i(x,f^\ell) = (1 - p(c)) f^\ell(i,i+1) + p(c) f^\ell(i,i+2). 
\]
The parameters $\alpha$ and $\beta$ will be set to meet the equilibrium conditions $R_i(x,f) - f(i,i) = g(x_i) / p(x_i)$ for $i > 0$. 
This immediately gives $\alpha = g(c) / p(c) - \F$ for $f^1$ and $f^2$, whereas
\[
    1 + \beta (1 - p(c)) - \F = \frac{g(c)}{p(c)}
    \implies
    \beta = \frac{g(c)}{p(c)} - \frac{1 - g(c) - \F}{1 - p(c)},
\]
for $f^3$ and $f^4$.
Hence, at this point we have designed the relevant rules and identified the parameter values for which the equilibrium conditions are satisfied for all agents $i > 0$.
In \textsc{Part II} of the proof, it remains to check that agent $0$ also optimizes accordingly and that the rules are well-defined (they are balanced by construction, but we must verify that values are non-negative).

\begin{lemma} \label{LE:characterization}
    Let $x = (x_0, c, c, \dots) \in X$ with $x_0 > 0$ and $\F \geq 0$.
    Then there is $f \in F$ such that $f(i,i) \geq \F$ for each $i \geq 0$, supporting $x$ 
    if and only if
    \[
        \frac{g(c)}{p(c)} + \F - 2 \leq \frac{g(x_0)}{p(x_0)} \leq \frac{1 - g(c) - \F}{1 - p(c)}.
    \]
\end{lemma}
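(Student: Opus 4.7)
The plan is to prove necessity and sufficiency separately, leveraging the aggregate identity implicit in Lemma~\ref{OBS:aggregate} together with the four explicit rules $f^1,f^2,f^3,f^4$ introduced before the statement.

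For necessity, assume $f\in F$ supports $x=(x_0,c,c,\dots)$ with $f(i,i)\ge\F$. Lemma~\ref{OBS:individual} gives $R_0(x_{>0},f)-1 = g(x_0)/p(x_0)$ and $R_i(x_{>i},f) = f(i,i) + g(c)/p(c)$ for each $i\ge 1$. Specializing Lemma~\ref{OBS:aggregate} (which gives $\mathbb{V}(x) = \hat{\mathbb{V}}(x,f)$) to the near-constant profile and rearranging delivers the key identity
\[
R_0 + S \;=\; \frac{2 - p(c) - g(c)}{1 - p(c)}, \qquad S \;\equiv\; \sum_{i \geq 1} p(c)^{i-1} f(i,i).
\]
The upper bound follows immediately: $f(i,i)\ge\F$ for every $i\ge 1$ gives $S\ge \F/(1-p(c))$, hence $R_0\le (2-p(c)-g(c)-\F)/(1-p(c))$, which rearranges to $g(x_0)/p(x_0) \le (1-g(c)-\F)/(1-p(c))$.

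The lower bound is the delicate direction. Applying $f(i,i)\ge\F$ only for $i\ge 2$ yields the weaker bound $S \ge f(1,1) + \F\, p(c)/(1-p(c))$; combining with the identity and the row-$1$ balance $f(1,1) = 2 - f(0,1)$ produces
\[
R_0 - f(0,1) \;\le\; \frac{p(c)(1-\F) - g(c)}{1-p(c)}.
\]
Separately, $f(0,k)\ge 0$ for $k\ge 2$ forces $R_0 \ge (1-p(c)) f(0,1)$, equivalently $f(0,1) \le R_0/(1-p(c))$. Eliminating $f(0,1)$ between these two inequalities delivers $R_0 \ge g(c)/p(c) + \F - 1$, which is exactly $g(x_0)/p(x_0) \ge g(c)/p(c) + \F - 2$.

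For sufficiency, combine the four explicit rules with Lemma~\ref{LE:convex}. Setting $\alpha = g(c)/p(c) + \F$ in $f^1$ satisfies the equilibrium condition for every $i\ge 1$ by Table~\ref{TAB:summary}, and a direct computation reduces $R_0(x,f^1)-f^1(0,0)$ to exactly the upper endpoint $(1-g(c)-\F)/(1-p(c))$; analogously, $\beta = (g(c)/p(c)+\F-1)/(1-p(c))$ in $f^4$ delivers the lower endpoint $g(c)/p(c)+\F-2$. Intermediate targets $\tilde x_0$ are then supported by convex combinations of the two endpoint rules via Lemma~\ref{LE:convex}. The main obstacles are, in necessity, seeing that the aggregate identity alone produces only the trivial lower bound $R_0\ge 0$, so one must isolate $f(1,1)$ inside $S$ and pair it with the row-wise constraint $f(0,k)\ge 0$ for $k\ge 2$; and in sufficiency, checking non-negativity of all entries of the constructed rules, since the conditions $\alpha\le 1$ (for $f^1$) and $\beta\ge 0$ (for $f^4$) correspond to complementary regimes of $g(c)/p(c)+\F$ relative to $1$, forcing one to switch between the pairs $\{f^1,f^2\}$ and $\{f^3,f^4\}$ depending on the parameter regime.
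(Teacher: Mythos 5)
Your proposal is correct and follows essentially the same route as the paper: the necessity direction uses exactly the same ingredients (the aggregate relation from Lemma~\ref{OBS:aggregate}, the bound $f(i,i)\geq\F$ with $f(1,1)$ isolated, the row-one balance $f(0,1)+f(1,1)=2$, and $R_0\geq(1-p(c))f(0,1)$), merely repackaged as the clean identity $R_0+S=\frac{2-p(c)-g(c)}{1-p(c)}$ in place of the paper's inequality chain. The sufficiency construction --- the same four rules with the same parameters $\alpha=\frac{g(c)}{p(c)}+\F$ and $\beta=\frac{g(c)/p(c)+\F-1}{1-p(c)}$, the regime split at $\frac{g(c)}{p(c)}+\F=1$, and interpolation via Lemma~\ref{LE:convex} --- coincides with the paper's.
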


\begin{proof}
    \textsc{Part I:} $\implies$
    Let 
    $x = (x_0, c, c, \dots)$ be an equilibrium such that $x_0 > 0$ and $f$ be its supporting rule such that $f(i,i) \geq \F$ for each $i \geq 0$.
    We aim to show that the two inequalities hold.
    As $f(i,i) \geq \F$, 
    \begin{align*}
        \hat{\mathbb{V}}(x,f) 
        &= f(0,0) + p(x_0) f(1,1) + p(x_0) p(c) f(2,2) + \dots + g(x_0) + p(x_0) g(c) + p(x_0) p(c) g(c) + \dots \\
        &\geq f(0,0) + p(x_0) f(1,1) + g(x_0) + p(x_0) ( g(c) + p(c) \F ) (1 + p(c) + p^2(c) + \dots ) \\
        &= 1 + p(x_0) f(1,1) + g(x_0) + p(x_0) ( g(c) + p(c) \F ) \mathbb{V}(c,c,\dots).
    \end{align*}
    As $f$ supports $x$, 
    by Lemma~\ref{OBS:aggregate}, $\hat{\mathbb{V}}(x,f) \leq \mathbb{V}(x) = 1 + p(x_0) \mathbb{V}(c,c,\dots)$.
    Hence,
    \[
        p(x_0) f(1,1) + g(x_0) + p(x_0) (g(c) + p(c) \F) \mathbb{V}(c,c,\dots) 
        \leq p(x_0) \mathbb{V}(c,c,\dots).
    \]
    Rearrange, divide by $p(x_0)$, and use that $\mathbb{V}(c,c,\dots) = 1 / (1 - p(c))$ and $f(1,1) \geq \F$:
    \[
        \frac{g(x_0)}{p(x_0)} 
        \leq (1 - g(c) - p(c) \F) \mathbb{V}(c,c,\dots) - f(1,1) 
        = \frac{1 - g(c) - p(c) \F}{1 - p(c)} - f(1,1)
        \leq \frac{1 - g(c) - \F}{1 - p(c)}.
    \]
    
    For the second inequality, multiply the penultimate $g(x_0) / p(x_0) \leq ( (1 - g(c) - p(c) \F) / (1 - p(c)) - f(1,1)$ by $- (1 - p(c)) < 0$ to flip the inequality:
    \begin{align*}
        - (1 - p(c)) \cdot \frac{g(x_0)}{p(x_0)}
        &\geq
        (1 - p(c)) f(1,1) + g(c) + p(c) \F - 1 \\
        \iff p(c) \cdot \frac{g(x_0)}{p(x_0)} 
        &\geq (1 - p(c)) f(1,1) + g(c) + p(c) \F - 1 + \frac{g(x_0)}{p(x_0)}.
    \end{align*}
    For the final term, the equilibrium condition asserts that $g(x_0) / p(x_0) = R_0(x,f) - f(0,0)$.
    By construction, $R_0(x,f) = (1 - p(c)) f(0,1) + p(c) (1 - p(c)) f(0,2) + \dots \geq (1 - p(c)) f(0,1)$, $f(0,0) = 1$, and $f(0,1) + f(1,1) = 2$.
    Hence, 
    \[
        \frac{g(x_0)}{p(x_0)} 
        = R_0(x,f) - f(0,0) \geq 1 - 2p(c) - (1 - p(c)) f(1,1).
    \]
    Using this expression for the final term above, we have
    \[
        p(c) \cdot \frac{g(x_0)}{p(x_0)} 
        \geq g(c) + p(c) \F - 2p(c).
    \]
    Divide by $p(c)$ to obtain the desired inequality.

    \bigskip
    \textsc{Part II:} $\impliedby$
    Let $\F \geq 0$ and $\tilde{x} = (\tilde{x}_0, c, c, \dots)$ with $\tilde{x}_0 > 0$ satisfy the inequalities of the statement.
    Then $1 - g(c) - \F > 0$.
    We will show that $\tilde{x}$ is an equilibrium profile.
    Given $c$, the two inequalities provide bounds on agent $0$'s investment:
    the lower bound is $x_0$ such that $g(x_0) / p(x_0) = \max \{ g(c) / p(c) + \F - 2, 0 \}$;
    the upper is $y_0$ such that $g(y_0) / p(y_0) = (1 - g(c) - \F) / (1 - p(c))$.
    Given that $g(\cdot) / p(\cdot)$ is increasing, we have $x_0 \leq \tilde{x}_0 \leq y_0$.
    In what follows, we construct rules to support $x = (x_0, c, c, \dots)$ and $y = (y_0, c, c, \dots)$;
    by Lemma~\ref{LE:convex}, $\tilde{x}$ is supported by a combination of these rules.
    We refer to Table~\ref{TAB:summary} for a summary of the expected investment returns $R_i(x,f^\ell) - f^\ell(i,i)$.
    Recall that we have identified the parameter values $\alpha$ and $\beta$ for which the equilibrium conditions are satisfied for all agents $i > 0$;
    it remains to check that also agent $0$ optimizes at $x_0$ and that $f(i,j) \geq 0$ everywhere.

    Below, we consider the two natural cases corresponding to the bounds (\textsc{Upper} and \textsc{Lower}), which then are further subdivided in two parts each (\textsc{Left} and \textsc{Right}) depending on whether $g(c) \leq p(c)$.
    In terms of Figure~\ref{FIG:overview}, \textsc{Upper} refers to the red dashed curve and \textsc{Lower} to the blue dashed curve (and horizontal axis);
    \textsc{Left} refers to the left-most portion of the graph left of the dashed vertical line at $c^*$.

    \bigskip
    \noindent
    \textsc{Upper:}
    Suppose that
    \[
        \frac{g(x_0)}{p(x_0)} = \frac{1 - g(c) - \F}{1 - p(c)}.
    \]
    We further subdivide in two cases.

    \begin{quote}
        \textsc{Upper left:} 
        Suppose that $g(c) / p(c) + \F \leq 1$.
        
        In this case, use the rule $f^1$ with parameter $\alpha = g(c) / p(c) + \F$.
        As $\alpha \in [0,1]$, the rule is well-defined.
        Moreover, as desired, agent $0$'s equilibrium condition is satisfied:
        \[
            R_0(x,f^1) - f^1(0,0)
            = \frac{1 - \alpha p(c)}{1 - p(c)} - \F
            = \frac{1 - g(c) - \F}{1 - p(c)} 
            = \frac{g(x_0)}{p(x_0)}.
        \]

        \textsc{Upper right:}
        Suppose that $g(c) / p(c) + \F \geq 1$.
        In this case, use the rule $f^3$ with parameter 
        \[
            \beta 
            = \frac{g(c)}{p(c)} - \frac{1 - g(c) - \F}{1 - p(c)}
            = \frac{g(c)}{p(c)} - \frac{g(x_0)}{p(x_0)}. 
        \]
        Note that 
        \[
            \frac{g(c)}{p(c)} - \frac{1 - g(c) - \F}{1 - p(c)}
            = \frac{g(c) / p(c) - g(c) - 1 + g(c) + \F}{1 - p(c)}
            = \frac{g(c) / p(c) + \F - 1}{1 - p(c)}.
        \]
        Hence, $\beta \geq 0$.
        The second underlying assumption maintained throughout the proof is that 
        \[
            \frac{g(x_0)}{p(x_0)} \geq \frac{g(c)}{p(c)} + \F - 2,
        \]
        so $\beta \leq 2 - \F \leq 2$ and the rule is well-defined.
        Then
        \begin{align*}
            R_0(x,f^3) - f^3(0,0)
            = 1 - \beta p(c) - \F
            &= 1 - p(c) \cdot \left ( \frac{g(c)}{p(c)} - \frac{1 - g(c) - \F}{1 - p(c)} \right ) - \F \\
            &= \frac{1 - g(c) - \F}{1 - p(c)} 
            = \frac{g(x_0)}{p(x_0)}.
        \end{align*}
        Hence, as desired, agent $0$'s equilibrium condition is satisfied.
    \end{quote}
    \noindent
    \textsc{Lower:} 
    Suppose that
    \[
        \frac{g(x_0)}{p(x_0)} = \max \left \{ \frac{g(c)}{p(c)} + \F - 2, 0 \right \}.
    \]
    We further subdivide in two cases.

    \begin{quote}
        \textsc{Lower right:}
        Suppose that $g(c) / p(c) + \F \geq 1$.
        In essence, we will address two cases at once here:
        if $g(c) / p(c) + \F \leq 2$, then $x_0 = 0$;
        if $g(c) / p(c) + \F \geq 2$, then $g(x_0) / p(x_0) = g(c) / p(c) + \F - 2 \geq 0$.
        Follow the case \textsc{Upper right} with the same $\beta$ but the rule $f^4$.
        By the underlying condition $g(c) / p(c) + \F - 2 \leq (1 - g(c) - \F) / (1 - p(c))$, we have
        \[
            \beta = \frac{g(c)}{p(c)} - \frac{1 - g(c) - \F}{1 - p(c)}
            \leq 2 - \F
            \iff
            f^4(1,1) = 2 - \beta \geq \F.
        \]
        Moreover,
        \[
            R_0(x,f^4) - f^4(0,0)
            = \beta (1 - p(c)) - 1 
            = \frac{g(c)}{p(c)} - g(c) - (1 - g(c) - \F) - 1
            = \frac{g(c)}{p(c)} + \F - 2.
        \]
        It is again clear that, if $g(c) / p(c) + \F \geq 2$, then agent $0$'s equilibrium condition is satisfied. 

        In the other possible case, $g(c) / p(c) + \F < 2$, we have set $x_0 = 0$ and the above is negative.
        Although the first-order condition does not bind for agent $0$, it is still correct that $x_0 = 0$ is optimal as the expected return is negative.

        \textsc{Lower left:}
        Suppose that $g(c) / p(c) + \F \leq 1$.
        In this case, $x_0 = 0$.
        Use the rule $f^2$ with parameter $\alpha = g(c) / p(c) + \F \leq 1$.
        Hence, $\alpha - \F = g(c) / p(c) \geq 0$.
        Then $R_0(x,f^2) - f^2(0,0) \leq 0$.
        As in \textsc{Lower right}, agent $0$'s first-order condition no longer binds but it is indeed optimal to set $x_0 = 0$.
    \end{quote}

    As noted at the start of \textsc{Part II}, it now follows from Lemma~\ref{LE:convex} that $\tilde{x}$ is an equilibrium profile.
\end{proof}
\newpage

\bibliographystyle{ecta}
\bibliography{bibliography}

\end{document}